\newcommand{\tr}[1]{\mathrm{tr}\left\{ #1 \right\}}
\newcommand{\Tr}[2]{\mathrm{tr}_{#2}\left\{ #1 \right\}}
\newcommand{\id}{\mathbbm{1}}
\newtheorem{theo}{Theorem}%[subsection]
\newtheorem{thm}[theo]{Theorem}
\newtheorem{lem}[theo]{Lemma}
\newtheorem{defn}[theo]{Definition}
\newtheorem{rem}[theo]{Remark}
\newtheorem{ex}[theo]{Example}
\newcommand{\blam}{\boldsymbol{\lambda}}
\newcommand{\bsig}{\boldsymbol{\sigma}}
\newcommand{\ba}{\mathbf{a}}
\newcommand{\bx}{\mathbf{x}}
\let\OLDthebibliography\thebibliography
\renewcommand\thebibliography[1]{
  \OLDthebibliography{#1}
  \setlength{\parskip}{0pt}
  \setlength{\itemsep}{0pt plus 0.3ex}
}
\begin{document}
\title{A formalism for steering with local quantum \\ measurements}
\author{A.~B.~Sainz}
\affiliation{Perimeter Institute for Theoretical Physics, 31 Caroline St. N, Waterloo, Ontario, Canada, N2L 2Y5.}
\author{L.~Aolita}
\affiliation{Instituto  de  F\'isica,  Universidade  Federal  do  Rio  de  Janeiro, Caixa  Postal  68528,  21941-972  Rio  de  Janeiro,  RJ,  Brazil.}
\author{M.~Piani}
\affiliation{SUPA and Department of Physics, University of Strathclyde, Glasgow G4 0NG, UK.}
\author{M.~J.~Hoban}
\affiliation{University of Oxford, Department of Computer Science, Wolfson Building, Parks Road, OX1 3QD, UK.}
\affiliation{School of Informatics, University of Edinburgh, 10 Crichton Street, Edinburgh EH8 9AB, UK}
\author{P.~Skrzypczyk}
\affiliation{H. H. Wills Physics Laboratory, University of Bristol, Tyndall Avenue, Bristol, BS8 1TL, UK.}

\date{\today}

\maketitle
\begin{abstract}
We develop a unified approach to classical, quantum and post-quantum steering. The framework is based on uncharacterised (black-box) parties performing quantum measurements on their share of a (possibly unphysical) quantum state, and its starting point is the characterisation of general no-signalling assemblages via non-positive local hidden-state models. By developing a connection to entanglement witnesses, this formalism allows for new definitions of families of assemblages, in particular via (i) non-decomposable positive maps and (ii) unextendible product bases. The former proves to be useful for constructing post-quantum assemblages with the built-in feature of yielding only quantum correlations in Bell experiments, while the latter always gives certifiably post-quantum assemblages. Finally, our framework is equipped with an inherent quantifier of post-quantum steering, which we call the negativity of post-quantum steering. We postulate that post-quantum steering should not increase under one-way quantum operations from the steered parties to the steering parties, and we show that, in this sense, the negativity of post-quantum steering is a convex post-quantum-steering monotone.
\end{abstract}

%\section*{Introduction}

The concept of steering was first introduced by E.~Schr\"odinger in 1935 \cite{st1} in response to the Einstein, Podolsky and Rosen paradox \cite{epr}. It refers to the phenomenon where one party, Alice, by performing measurements on one part of a shared system, seemingly remotely `steers' the state of the system held by a distant party, Bob, in a way which has no explanation in terms of local causal influences. Steering has only recently been formally defined in a quantum information-theoretic setting \cite{st2}, as a way of certifying the entanglement of quantum systems without the need to trust one of the parties, or when one of the parties is using uncharacterised devices. In this setting, the uncharacterised party convinces the other party that they shared entanglement by demonstrating steering. Furthermore, if all parties are uncharacterised (or untrusted) then one recovers the device-independent setting of a standard Bell test. Steering thus may be seen as one in a family of non-classical phenomena, closely related to entanglement and Bell non-locality \cite{Bell}. Indeed, Bell non-locality implies steering, and steering implies entanglement, however all three concepts are inequivalent \cite{st2,quintino2015}.

It is well known that, in spite of demonstrating non-locality, local measurements on entangled quantum systems cannot be used to communicate superluminally. That is, correlations that are  generated by varying the choice of local measurements on space-like separated quantum subsystem -- which we define to be \textit{quantum correlations} -- satisfy the principle of no-signalling. We will call \textit{no-signalling colleations} all correlations that do not permit signalling. One can conceive of no-signalling correlations that cannot be realised by local measurements on quantum states, hence called \textit{post-quantum correlations}; this possibility was first pointed out in a seminal work by Popescu and Rohrlich \cite{PR}. A pertinent question at the heart of quantum foundations since then has regarded the reason why we do not seem to observe these post-quantum correlations in nature \cite{popescu}. This line of questioning has resulted in the proposal of physical and information-theoretic axioms that aim to single out the set of quantum correlations among the no-signalling correlations \cite{ntcc, ntcc2, IC, nanlc, ML, LO, aq}. 

Since Bell non-locality implies steering, it is natural that there should also exist \textit{post-quantum steering},  i.e. steering that does not lead to superluminal signalling yet cannot be realised through local measurements on a quantum system. In the standard steering scenario -- only two parties, one of whom is uncharacterised and the other who holds a quantum system -- there is no such thing as post-quantum steering: the only ways in which a single Alice can steer a quantum Bob without leading to signalling have a quantum explanation \cite{ghjw}. However, it is possible to have post-quantum steering in some multipartite generalisations of steering. Such situations involve three or more parties, with at least two uncharacterised parties, as first pointed out in Ref. \cite{PQP}. Remarkably, it is possible to have post-quantum steering without the presence of post-quantum non-locality, demonstrating that these two concepts are in fact intrinsically distinct \cite{PQP}.

The question of how to best understand post-quantum steering, including its possibilities and its limitations -- which could ultimately lead to an information-theoretic reason why post-quantum steering does not appear in nature -- is still open. One main reason for this is the lack of a framework within which to study quantum as well as post-quantum steering in a unified manner. This makes the implications of post-quantum steering difficult to address. We cannot take a black-box approach -- that is, based solely on the use of conditional probability distributions, as in the case of Bell non-locality -- since there is the assumption that one or more parties have a quantum system and their devices are well-characterised. Nevertheless, in the steering framework there is a natural analogue to conditional probability distributions: the \textit{assemblage}. The latter is the collection of states of the characterised parties for each possible measurement outcome of measurements made by the uncharacterised systems. Another obstacle on the path towards understanding the power of post-quantum steering in information tasks is the lack of examples of (large families of) post-quantum steering assemblages.

In this work we develop a framework for steering based on that of \cite{toni} (see also \cite{barnum}) for Bell non-locality. In this formalism, the parties share a (potentially non-quantum) system in the (potentially unphysical) state $\tilde{\rho}$, where some parties steer the others by performing quantum measurements on their share of the system. By unphysical state we mean that $\tilde{\rho}$ is not necessarily positive semi-definite, but it is Hermitian and has unit trace. We show that different families of assemblages arise naturally within the framework depending on the properties of the operator $\tilde{\rho}$, and in this way we can identify assemblages with a local hidden state model, as well as quantum and general no-signalling assemblages. 

Furthermore, we describe a new family of assemblages, which we call Gleason assemblages, in analogue to Gleason correlations \cite{toni}. These are assemblages that arise when $\tilde{\rho}$ is an \textit{entanglement witnesses}. Motivated by the fact that every positive (but not completely positive) map results in an entanglement witness \cite{horodecki}, we consider a novel means of generating post-quantum assemblages: the application of positive (but not completely positive) maps to the quantum systems held by the characterised parties--equivalently, to the assemblage. We show that this construction automatically leads to quantum correlations upon measuring the characterised systems, yet can lead to post-quantum assemblages when a special class of positive maps is considered (so-called non-decomposable maps). In other words, we present a constructive way of generating post-quantum assemblages that only produce quantum correlations. This provides the first general analyatic construction of post-quantum steering without post-quantum nonlocality, with the only known examples to date being obtained through numerical optimisation \cite{PQP}.

We also study assemblages that arise when the parties perform local measurements on entanglement witnesses constructed from an unextendible product basis \cite{LO2}. This is a simple construction that always yields certifiable post-quantum assemblages (although with post-quantum correlations). In addition, we provide a characterisation of general no-signalling assemblages as affine combinations of local hidden state assemblages. This result, which generalises that of Ref.~\cite{tony} for Bell scenarios, not only provides an operational interpretation for non-classical assemblages but also serves as a useful tool for developing our work further. Finally, our framework also provides an inherent post-quantum steering quantifier in terms of the minimal negativity of the operator $\tilde{\rho}$ necessary to reproduce a given assemblage. We prove that such a quantifier does not increase under processing of the assemblage by means of one-way quantum operations from the steered party to the steering party, whereas standard steering is postulated not to increase under one-way local operations and classical communication.

The outline of the paper is as follows. In Sec.~\ref{sec1} we introduce the concept of steering and local hidden state models. Then in the next two sections (Secs.~\ref{ap:negLHS} and \ref{sec3}) we introduce a generalisation of local hidden state models that can account for general no-signalling assemblages. The tools developed in these sections allow us to introduce our general formalism for steering in quantum theory and beyond in Sec.~\ref{se:for}, and then introduce the notion of Gleason assemblages. The direct connection between entanglement witnesses and positive but not completely positive maps is then exploited in Sec.~\ref{sec5} to generate new examples of post-quantum steering without post-quantum non-locality. In Sec.~\ref{sec:UPB} we generate post-quantum assemblages using entanglement witnesses constructed from unextendible product bases. In Sec.~\ref{sec:quantifier} we introduce a quantifier of post-quantum steering, proving its monotonicity under one-way quantum operations. We conclude with some remarks and open problems.

\section{Steering}\label{sec1}

Let us start by describing the simplest steering scenario consisting of two separated parties, Alice and Bob. The roles these parties play in the experiment are different: 
Alice (a.k.a. the `steering' party) is thought of as having a black box, where she decides on an input $x$ and obtains an outcome $a$. Nothing is assumed about the inner workings of this device. On the other hand, the situation at Bob's lab (who is known as the `steered' party) is fully described by means of quantum mechanics: he has access to a system whose marginal state is given by $\rho_R$. Each round in the experiment consists of Alice choosing an input $x$ and obtaining an outcome $a$, with probability $p(a|x)$, and Bob obtaining the conditional marginal state $\rho_{a|x}$ into which his system has been steered. It is convenient to work with the unnormalised steered states $\sigma_{a|x} := p(a|x)\rho_{a|x}$ which contains information both about the probabilities of the steering party, $p(a|x) = \tr{\sigma_{a|x}}$, and the conditional marginal states $\rho_{a|x} = \sigma_{a|x}/p(a|x)$.

The first relevant question in such a set-up is: given a set $\bsig_{A\vert X}:=\{\sigma_{a|x}\}_{a,x}$ of conditional states $\sigma_{a|x}$, which we shall refer to as an \textit{assemblage}, prepared in Bob's lab, could it have arisen by Alice and Bob performing measurements on a classically correlated shared system?

In general, in a quantum scenario, the elements of the assemblage are given by
\begin{equation}
\label{eq:quantumassemblage}
\sigma_{a|x} = \Tr{(M_{a|x} \otimes \id )\, \rho}{A},
\end{equation}
where $\rho$ is a state shared by Alice and Bob, and $M_{a|x}$ is the $a$-th element of a general measurement on Alice's subsystem -- i.e., a positive-operator valued measure (POVM) -- $M_x:=\{M_{a|x}\}_{a,x}$, with $M_{a|x}\geq 0$ and $\sum_{a} M_{a|x}=\openone$.

A separable (or classically correlated) bipartite state has the structure
\begin{equation}
\label{eq:separable}
\rho=\sum_\lambda p_\lambda \rho^A_{\lambda}\otimes \rho^B_\lambda,
\end{equation}
with $\{p_\lambda\}$ a probability distribution, and each $\rho^A_{\lambda}$ a normalized state for $A$ (similarly for $B$).

If $\rho$ in (\ref{eq:quantumassemblage}) can be chosen to be separable, that is, as in (\ref{eq:separable}), the experiment is said to have a \textit{local hidden state} (LHS) model, and the members of the assemblage can be written as
\begin{align}
\sigma_{a|x} = \sum_\lambda p_\lambda(a|x) \, \sigma_\lambda\,,
\end{align}
where $\sigma_{\lambda} \geq 0$ are sub-normalised quantum states such that $p(\lambda):=\tr{\sigma_\lambda}$ satisfies $\sum_\lambda p(\lambda) = 1$, and $p_\lambda(a|x)$ are well-defined conditional probability distributions for all $\lambda$. With respect to the notation of
\eqref{eq:separable}, one would have $p_{\lambda}(a|x)=\Tr{M_{a|x}\rho_{\lambda}^A}{}$ and $\sigma_\lambda=\rho_\lambda^B/p_\lambda$.

 Conversely, whenever the conditional states $\sigma_{a|x}$ do not admit an LHS model -- that is, they cannot arise from local measurements on a separable state -- it is said that steering has been demonstrated from Alice to Bob, and in this case, a state $\rho$ that is entangled is necessarily shared between Alice and Bob in order to satisfy (\ref{eq:quantumassemblage}).

In the literature, the steering (resp.~steered) party is also sometimes said to be uncharacterised (resp.~characterised) or untrusted (resp.~trusted), depending on the particular context in which the steering experiment is performed (for instance, a cryptographic scenario). In this manuscript, we will use these names synonymously without inheriting any of their implicit assumptions on the nature or circumstances of the set-up. 

We are also interested in situations beyond the standard bipartite steering scenario, involving an arbitrary but fixed number of parties, where some are characterised and some are not. Characterised parties then describe their local systems by means of quantum mechanics, i.e. the marginal states of their systems is specified by a density operator to which they have access. On the other hand, uncharacterised parties only rely on the classical labels of the inputs and outputs of their devices, and their outcome statistics. As such, in a scenario with $n$ uncharacterised parties, the object of interest is the multipartite \textit{assemblage} $\bsig_{A_1\ldots A_{n} | X_1 \ldots X_{n}} := \{ \sigma_{a_1 \ldots a_{n} | x_1 \ldots x_{n}} \}_{a_1, \ldots ,a_{n} , x_1, \ldots, x_{n}}$, the ensemble of unnormalised states $\sigma_{a_1 \ldots a_{n} | x_1 \ldots x_{n}}$, which are conditionally prepared for the characterised parties by the uncharacterised ones, 
when they input $x_1 \ldots x_{n}$ on their devices and obtain outcomes $a_1 \ldots a_{n}$ (see Fig.~\ref{fig:setup}). 
Analogously to the bipartite setting, $\tr{\sigma_{a_1 \ldots a_{n} | x_1 \ldots x_{n}}}=p(a_1 \ldots a_{n} | x_1 \ldots x_{n})$. In the following we will consider the case where there is only one characterised party, referred to as Bob. In general, our results will also apply to the case of more than one characterised party, by considering these as just one (larger) effective characterised party. We will explicitly discuss the details when the number of characterised parties plays a relevant role.

\begin{figure}
\begin{center}
\begin{tikzpicture}[scale=0.7]

\shade[draw, thick, ,rounded corners, inner color=white,outer color=gray!50!white] (-4.25,0.7) rectangle (-2.75,2) ;
\node at (-1.75,1.35) {$\cdots$};
\shade[draw, thick, ,rounded corners, inner color=white,outer color=gray!50!white] (-0.75,0.7) rectangle (0.75,2) ;
\node at (1.75,1.35) {$\cdots$};
\shade[draw, thick, ,rounded corners, inner color=white,outer color=gray!50!white] (2.75,0.7) rectangle (4.25,2) ;

\draw[thick, ->] (0,2.7) -- (0,2);
\draw[thick, ->] (-3.5,2.7) -- (-3.5,2);
\draw[thick, ->] (3.5,2.7) -- (3.5,2);

\draw[thick, ->] (0,0.7) -- (0,0);
\draw[thick, ->] (-3.5,0.7) -- (-3.5,0);
\draw[thick, ->] (3.5,0.7) -- (3.5,0);

\node at (0,3) {$x_k$};
\node at (0,-0.3) {$a_k$};
\node at (-3.5,3) {$x_1$};
\node at (-3.5,-0.3) {$a_1$};
\node at (3.5,3) {$x_n$};
\node at (3.5,-0.3) {$a_n$};

\node at (7, 2) {$\sigma_{a_1\ldots a_n|x_1\ldots x_n}$};

\node[draw=gray!60!black,shape=circle,shading=ball, ball color=gray,scale=.5] at (7,1.2) {};

\end{tikzpicture}
\end{center}
\caption{Steering scenario with $n+1$ distant parties: $n$ steering parties each having access to an uncharacterised measuring device (box) and one steered party having a characterised quantum system with full quantum control. Each steering party performs a measurement $x_k$ on their device, obtaining an outcome $a_k$.
The characterised party's systems are steered into the conditional states $\sigma_{a_1\ldots a_n|x_1\ldots x_n}$ with probability $p(a_1\ldots a_n|x_1\ldots x_n)=\tr{\sigma_{a_1\ldots a_n|x_1\ldots x_n}}$. }
\label{fig:setup}
\end{figure}
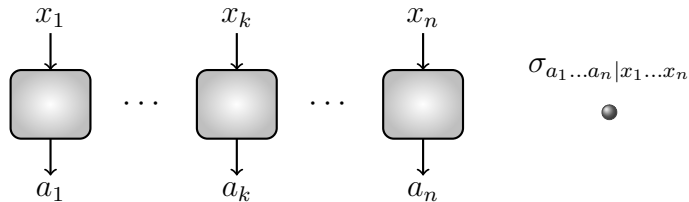

Multipartite steering experiments lead to richer phenomena than the bipartite experiments. In the former case it is possible to have steering that goes beyond what quantum mechanics allows for, while still complying with the principle of no superluminal signalling \cite{PQP}, while in the latter case this is impossible \cite{ghjw}. One of the primary goals of this paper is to develop a formalism which can deal with both quantum and post-quantum steering in a unified manner. 
To that end, in the next section we introduce a representation of general multipartite assemblages in terms of affine combinations of local hidden states.
This is a generalisation of similar results in Bell scenarios \cite{tony}, and will be useful for us to introduce a general formalism for steering later. 

\section{Pseudo LHS models}\label{ap:negLHS}

In this section, we present a characterisation of general (i.e. no-signalling) assemblages as affine combinations of local hidden states. We will denote these by \textit{pseudo-LHS models}. 

Consider hence a general steering scenario where $n$ uncharacterised parties, henceforth denoted as Alices, steer a characterised one, denoted as Bob. Assume that each of the $n$ Alices operates a device whose input can assume $m$ different values and returns one out of $d$ outcomes, hence producing the assemblage $\bsig_{A_1\ldots A_{n} | X_1 \ldots X_{n}}$ for Bob.

Whenever the Alices and Bob share a classically correlated system, the assemblages that may arise by the Alices performing  local measurements on their share of the system are said to have an LHS model, as mentioned in the previous section. The formal definition of such a model in the multipartite scenario is the following. 

\begin{defn}\label{lhs}\textbf{ LHS model.} \\
An assemblage $\bsig_{A_1\ldots A_{n} | X_1 \ldots X_{n}}$  has an LHS model if it can be decomposed as
\begin{equation}\label{lhsqm2}
\sigma_{a_1 \ldots a_{n} | x_1 \ldots x_{n}} = \sum_{\lambda} \, p_{\lambda}^{(1)}(a_1|x_1) \, \ldots p_{\lambda}^{(n)}(a_{n}|x_{n}) \, \sigma_{\lambda}
\end{equation}
where $p_{\lambda}^{(j)}(a_j|x_j) \geq 0$  is a conditional probability distribution for every $\lambda$ and every uncharacterised party $j$, and $\sigma_\lambda$ (the local hidden states) are unnormalised quantum states that satisfy
\begin{align}
&\sigma_\lambda \geq 0 \quad \forall \, \lambda \,,\\
&\tr{\sum_{\lambda} \, \sigma_{\lambda}} =1.
\end{align}
\end{defn}

The purpose of this work is to develop a general framework for steering that goes beyond LHS and quantum assemblages. A possible strategy for this is to generalise the definition of an LHS model to include quantum assemblages and potentially some post-quantum ones, in a similar spirit as previously done in non-locality \cite{tony, degorre}. Thus, we propose the following generalisation, which we denote as \textit{pseudo-LHS models}.

\begin{defn}\label{qlhs}\textbf{Pseudo-LHS model.} \\
An assemblage $\bsig_{A_1\ldots A_{n} | X_1 \ldots X_{n}}$  has a pseudo-LHS model if it can be decomposed as
\begin{equation}\label{qm2}
\sigma_{a_1 \ldots a_{n} | x_1 \ldots x_{n}} = \sum_{\lambda} \, p_{\lambda}^{(1)}(a_1|x_1) \, \ldots p_{\lambda}^{(n)}(a_{n}|x_{n}) \, \sigma_{\lambda}
\end{equation}
where $p_{\lambda}^{(j)}(a_j|x_j) \geq 0$  is a conditional probability distribution for every $\lambda$ and every uncharacterised party $j$, and the LHSs satisfy
\begin{equation}
\tr{\sum_{\lambda} \, \sigma_{\lambda}} =1.
\end{equation}
\end{defn}

Note that in Def. \ref{qlhs}, if we demand in addition that $\sigma_\lambda \geq 0$ $\forall \, \lambda$, we recover Def. \ref{lhs} of a LHS model. Hence, we are relaxing the model by allowing local hidden states that are not positive semidefinite. In particular, this implies that we allow the hidden variables $\lambda$ to have negative probabilities, since $p(\lambda)=\tr{\sigma_\lambda}$\footnote{A natural question is what would happen if the local hidden states are allowed to not be positive semidefinite but constrained to $p(\lambda) \geq 0$. The set of assemblages that admit such a model is strictly contained within the pseudo-LHS set, since they only allow for local correlations for the output statistics of the uncharacterised parties. }. 

Note however that, when generalising LHS models we encounter a freedom that was not present in Bell scenarios. Indeed, from Eq.~\eqref{lhsqm2} one could either relax the LHS assumption by considering assemblages that are (i) convex combinations of non-positive semidefinite states, or (ii) affine combinations of positive semidefinite states. Definition \ref{qlhs} corresponds to (i). In Bell scenarios, in contrast, the corresponding formalism admits only the analogue to (ii), in terms of affine combinations of local correlations. This freedom, however, does not introduce any ambiguity in the formalism since they turn out to be equivalent, as we show next. 

\begin{lem}\label{thm-NS-affine-prod}
Let $\bsig_{A_1\ldots A_{n} | X_1 \ldots X_{n}}$ be an assemblage in a steering scenario where $n$ uncharacterised parties steer a characterised one. The assemblage has a pseudo-LHS model iff it can be written as an affine combination of quantum states. 
\end{lem}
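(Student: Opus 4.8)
The plan is to prove the two implications separately, the forward direction being essentially trivial and the reverse direction requiring a small amount of bookkeeping. First I would observe that if $\bsig_{A_1\ldots A_n|X_1\ldots X_n}$ has a pseudo-LHS model as in \eqref{qm2}, then one can immediately rewrite it as an affine combination of \emph{normalised} quantum states: set $q_\lambda := \tr{\sigma_\lambda}$ (allowing $q_\lambda$ to be negative) and $\tilde\sigma_\lambda := \sigma_\lambda / q_\lambda$ for those $\lambda$ with $q_\lambda \neq 0$; if $q_\lambda \neq 0$ but $\tilde\sigma_\lambda$ is not positive semidefinite, decompose it into its positive and negative parts via the spectral decomposition and absorb the two pieces as two new terms with real (possibly negative) coefficients summing to $q_\lambda$. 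Grouping everything, and using $\tr{\sum_\lambda \sigma_\lambda}=1$, gives $\sigma_{a_1\ldots a_n|x_1\ldots x_n} = \sum_\mu c_\mu\, p_\mu^{(1)}(a_1|x_1)\cdots p_\mu^{(n)}(a_n|x_n)\,\rho_\mu$ with each $\rho_\mu$ a genuine density operator, each $p_\mu^{(j)}$ a legitimate conditional distribution, each $c_\mu \in \mathbb{R}$, and $\sum_\mu c_\mu = 1$ — i.e.\ an affine combination of (product-of-local-response times) quantum states. This is the sense in which ``affine combination of quantum states'' should be read, and I would make that reading explicit at the start of the proof.

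For the converse, suppose $\bsig$ is written as an affine combination $\sigma_{a_1\ldots a_n|x_1\ldots x_n} = \sum_\mu c_\mu\, D_\mu(a_1\ldots a_n|x_1\ldots x_n)\,\rho_\mu$, where $\sum_\mu c_\mu = 1$, each $\rho_\mu \geq 0$ is normalised, each $D_\mu$ is a local (LHS-type) response function — a product $p_\mu^{(1)}(a_1|x_1)\cdots p_\mu^{(n)}(a_n|x_n)$ — and the $c_\mu$ may be negative. I would then simply define $\sigma_\mu := c_\mu \rho_\mu$ for each $\mu$; these are Hermitian operators (not necessarily positive), they satisfy $\tr{\sum_\mu \sigma_\mu} = \sum_\mu c_\mu = 1$, and the decomposition \eqref{qm2} holds verbatim with hidden variable $\mu$ and conditional distributions $p_\mu^{(j)}$. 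Hence $\bsig$ has a pseudo-LHS model by Definition~\ref{qlhs}. The only subtlety is making sure the ``affine combination'' in the statement already carries the product-of-local-response structure; if the intended statement is the weaker ``affine combination of product LHS assemblages'' this is immediate, and if one wants to start from a bare affine combination $\sum_\mu c_\mu \tau_\mu$ of LHS assemblages $\tau_\mu$ each of which is itself a convex mixture $\sum_\lambda p_\lambda^{\mu,(1)}\cdots p_\lambda^{\mu,(n)} \sigma_\lambda^\mu$, one first expands each $\tau_\mu$ and then merges the two index sets, which changes nothing essential.

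The main (very mild) obstacle is bookkeeping about normalisation and signs: one has to be careful that passing to $\tilde\sigma_\lambda = \sigma_\lambda/q_\lambda$ is only legal when $q_\lambda \neq 0$, so I would note that any $\lambda$ with $\tr{\sigma_\lambda}=0$ can either be dropped (if $\sigma_\lambda = 0$) or split into positive/negative parts of equal trace (contributing a ``coefficient-zero'' but operator-nonzero term), and that none of this affects the condition $\tr{\sum_\lambda \sigma_\lambda}=1$. A second, purely cosmetic point is that the local response functions $p_\lambda^{(j)}(a_j|x_j)$ are unchanged throughout — only the operator part is being massaged — so the product (LHS) structure on the uncharacterised side is manifestly preserved in both directions. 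Once these points are dispatched, the lemma follows, and I would close by remarking that this equivalence is exactly what removes the apparent (i)-vs-(ii) ambiguity flagged just before the lemma statement.
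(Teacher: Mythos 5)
Your proposal is correct and follows essentially the same route as the paper: the easy direction is the identical substitution $\sigma_\lambda := q(\lambda)\rho_\lambda$, and the other direction uses the same Jordan (positive/negative) decomposition with a doubled hidden-variable index, which is exactly the paper's auxiliary variable $\mu\in\{+,-\}$ with $q(\lambda,\mu):=\mu c_{\lambda,\mu}$. The only cosmetic difference is that the paper writes $\sigma_\lambda = c_{\lambda,+}\rho_{\lambda,+} - c_{\lambda,-}\rho_{\lambda,-}$ directly rather than normalising by $\tr{\sigma_\lambda}$ first, which sidesteps the $\tr{\sigma_\lambda}=0$ edge case you rightly flag.
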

\begin{proof}
 First, consider an assemblage that has a decomposition as an affine combination of quantum states:
 \begin{equation}
 \sigma_{a_1 \ldots a_{n} | x_1 \ldots x_{n}} = \sum_{\lambda} \, q(\lambda) \, p_{\lambda}^{(1)}(a_1|x_1) \, \ldots p_{\lambda}^{(n)}(a_{n}|x_{n}) \, \rho_\lambda\,,
 \end{equation}
 where $\rho_\lambda$ are, for each $\lambda$, normalised hidden quantum states on Bob's system and $q(\lambda)$ is a pseudo probability distribution on $\lambda$, i.e. $q(\lambda) \in \mathbb{R}$ for all $\lambda$ and $\sum_\lambda q(\lambda) = 1$. By defining $\sigma_\lambda := q(\lambda)  \rho_{\lambda}$ it follows that the assemblage has a pseudo-LHS model. 
 
 For the converse, start from an assemblage with a pseudo-LHS model:
 \begin{align}\label{eq:con1}
 \sigma_{a_1 \ldots a_{n} | x_1 \ldots x_{n}} = \sum_{\lambda} \, p_{\lambda}^{(1)}(a_1|x_1) \, \ldots p_{\lambda}^{(n)}(a_{n}|x_{n}) \, \sigma_{\lambda}\,.
 \end{align}
Each $\sigma_\lambda$ can be expressed as
\begin{align*}
\sigma_\lambda = c_{\lambda,+} \, \rho_{\lambda,+} - c_{\lambda,-} \rho_{\lambda,-} \, \quad \forall \, \lambda\,,
\end{align*}
where the operators $\rho_{\lambda,+}$ and $\rho_{\lambda,-}$ are normalised quantum states and $c_{\lambda,+}$ and $c_{\lambda,-}$ non-negative reals such that $p(\lambda)=c_{\lambda,+} - c_{\lambda,-}$ for all $\lambda$. 

By introducing an auxiliary binary hidden variable $\mu = \{+,-\}$, Eq.~\eqref{eq:con1} may be rewritten as
\begin{align}
 \sigma_{a_1 \ldots a_{n} | x_1 \ldots x_{n}} &= \sum_{\lambda, \mu} \, p_{\lambda}^{(1)}(a_1|x_1) \, \ldots p_{\lambda}^{(n)}(a_{n}|x_{n}) \, q(\lambda, \mu)\,\rho_{\lambda,\mu}\,, \label{NS-affine-prod}
\end{align}
where $q(\lambda, \mu) := \mu c_{\lambda,\mu}$. The fact that this is a pseudo probability distribution on $\lambda$ and $ \mu$ follows from the fact that $\sum_{\lambda, \mu} q(\lambda, \mu) = \sum_{\lambda} \, \tr{\sigma_{\lambda}} =1$. Hence, the assemblage may be written as an affine combination of normalised quantum states. 
\end{proof}

This allows us to understand the problem in a semi-classical way (see Fig.~\ref{f:negmod}). An unphysical source produces the hidden variables $(\lambda, \mu)$ with pseudo probability $q(\lambda, \mu)$ \cite{feynman} and sends them to the $n+1$ parties. The uncharacterised parties produce the outcomes via the response functions $p_{\lambda}^{(j)}(a_j|x_j)$ whereas the characterised one produces locally the states $\rho_{\lambda, \mu}$. The assemblage is then explained by Eq.~(\ref{NS-affine-prod}) as just an affine combination of such semi-classical preparations. Note that all the non-classicality of the assemblage is contained in the negativity of the pseudo-probability distribution $q$. 
In the case where the steering scenario consists of more than one characterised party (say, $t$), we can take a step further and express each of the quantum states $\rho_{\lambda, \mu}$ as affine combinations of product states $\rho_{\lambda, \mu, \nu} :=\rho_{\lambda, \mu, \nu}^{(1)} \otimes \ldots \otimes\rho_{\lambda, \mu, \nu}^{(t)}$ with pseudo-probabilities $q(\lambda, \mu, \nu)$ \cite{jonGPT}. Hence, the assemblage may in this case be expressed as
\begin{align}
 \sigma_{a_1 \ldots a_{n} | x_1 \ldots x_{n}} &= \sum_{\lambda, \mu, \nu} \, p_{\lambda}^{(1)}(a_1|x_1) \, \ldots p_{\lambda}^{(n)}(a_{n}|x_{n}) \, q(\lambda, \mu, \nu)\,\rho_{\lambda,\mu, \nu}\,. \label{NS-affine-prod-many}
 \end{align}
This generalises the possibility to express as affine combinations both conditional probability distributions for non-locality scenarios \cite{tony} and shared quantum states \cite{jonGPT}. 
A similar semi-classical interpretation of the steering experiment with many characterised parties is presented in Fig.~\ref{f:negmod}b.

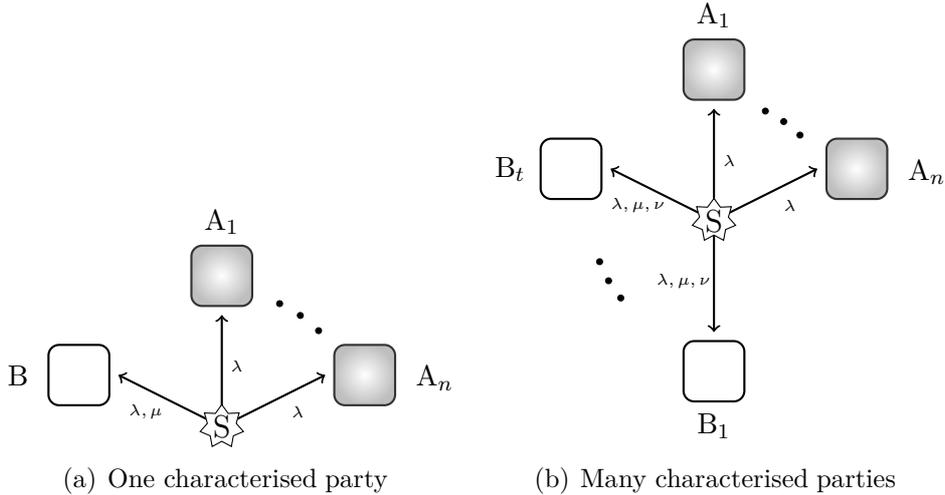
\begin{figure}
\begin{center}
\subfigure[One characterised party]{
\begin{tikzpicture}[scale=0.4]
\node (ai) at (90:5) {} ;
\node (af) at (20:5) {} ;
%\node (bi) at (270:5) {} ;
\node (bf) at (160:5) {} ;

\node at ($ (ai) + (0,1.8) $) {\small{$\mathrm{A}_1$}};
\node at ($ (af) + (2.3,-0.1) $) {\small{$\mathrm{A}_{n}$}};
%\node at ($ (bi) + (0,-1.8) $) {\small{$\mathrm{B}$}};
\node at ($ (bf) + (-2,0) $) {\small{$\mathrm{B}$}};

\shade[draw, thick, color=gray!40!black ,rounded corners, inner color=white,outer color=gray!50!white] ($ (ai) + (-1,-1) $) rectangle ($ (ai) + (1,1) $) ;
\shade[draw, thick,color=gray!40!black ,rounded corners, inner color=white,outer color=gray!50!white] ($ (af) + (-1,-1) $) rectangle ($ (af) + (1,1) $) ;
%\draw[thick, ,rounded corners] ($ (bi) + (-1,-1) $) rectangle ($ (bi) + (1,1) $) ;
\draw[thick, ,rounded corners] ($ (bf) + (-1,-1) $) rectangle ($ (bf) + (1,1) $) ;
\node[draw,shape=circle,fill,scale=.2] at (45:4.5) {};
\node[draw,shape=circle,fill,scale=.2] at (55:4.5) {};
\node[draw,shape=circle,fill,scale=.2] at (65:4.5) {};

\draw[thick, ->] (0,0) -- ($ (ai) + (0,-1.3) $);
\draw[thick, ->] (0,0) -- ($ (af) + (-1.3,0) $);
%\draw[thick, ->] (0,0) -- ($ (bi) + (0,1.3) $);
\draw[thick, ->] (0,0) -- ($ (bf) + (1.3,0) $);

\node[draw, shape=star,star points=7, fill,color=black,inner color=white,scale=1 ] at (0,0) {};
\node at (0,0) {S};

\node at (2.5,0.5) {\tiny{$\lambda$}};
\node at (0.5,2) {\tiny{$\lambda$}};
%\node at (-1,-2) {\tiny{$\lambda,\mu$}};
\node at (-2.5,0.5) {\tiny{$\lambda,\mu$}};
\end{tikzpicture}}
\subfigure[Many characterised parties]{
\begin{tikzpicture}[scale=0.4]
\node (ai) at (90:5) {} ;
\node (af) at (20:5) {} ;
\node (bi) at (270:5) {} ;
\node (bf) at (160:5) {} ;

\node at ($ (ai) + (0,1.8) $) {\small{$\mathrm{A}_1$}};
\node at ($ (af) + (2.3,-0.1) $) {\small{$\mathrm{A}_{n}$}};
\node at ($ (bi) + (0,-1.8) $) {\small{$\mathrm{B}_1$}};
\node at ($ (bf) + (-2,0) $) {\small{$\mathrm{B}_t$}};

\shade[draw, thick, color=gray!40!black ,rounded corners, inner color=white,outer color=gray!50!white] ($ (ai) + (-1,-1) $) rectangle ($ (ai) + (1,1) $) ;
\shade[draw, thick,color=gray!40!black ,rounded corners, inner color=white,outer color=gray!50!white] ($ (af) + (-1,-1) $) rectangle ($ (af) + (1,1) $) ;
\draw[thick, ,rounded corners] ($ (bi) + (-1,-1) $) rectangle ($ (bi) + (1,1) $) ;
\draw[thick, ,rounded corners] ($ (bf) + (-1,-1) $) rectangle ($ (bf) + (1,1) $) ;
\node[draw,shape=circle,fill,scale=.2] at (45:4) {};
\node[draw,shape=circle,fill,scale=.2] at (55:4) {};
\node[draw,shape=circle,fill,scale=.2] at (65:4) {};
\node[draw,shape=circle,fill,scale=.2] at (210:4) {};
\node[draw,shape=circle,fill,scale=.2] at (200:4) {};
\node[draw,shape=circle,fill,scale=.2] at (220:4) {};

\draw[thick, ->] (0,0) -- ($ (ai) + (0,-1.3) $);
\draw[thick, ->] (0,0) -- ($ (af) + (-1.3,0) $);
\draw[thick, ->] (0,0) -- ($ (bi) + (0,1.3) $);
\draw[thick, ->] (0,0) -- ($ (bf) + (1.3,0) $);

\node[draw, shape=star,star points=7, fill,color=black,inner color=white,scale=1 ] at (0,0) {};
\node at (0,0) {S};

\node at (2.5,0.5) {\tiny{$\lambda$}};
\node at (0.5,2) {\tiny{$\lambda$}};
\node at (-1,-2) {\tiny{$\lambda,\mu, \nu$}};
\node at (-2.5,0.5) {\tiny{$\lambda,\mu, \nu$}};
\end{tikzpicture}}
\end{center}
\caption{\textbf{Semi-classical approach to a no-signalling assemblage.} 
(a) One characterised party: An unphysical source produces the hidden variables $(\lambda, \mu)$ with pseudo probability $q(\lambda, \mu)$ and sends them to the $n+1$ parties. The uncharacterised parties produce the outcomes via the response functions $p_{\lambda}(a_j|x_j)$, whereas the characterised ones produce the states $\rho_{\lambda, \mu}$ locally. The no-signalling assemblage is then explained by Eq.~(\ref{NS-affine-prod}) as an affine combination of such local preparations.
(b) $t$ characterised parties: A source produces the hidden variables $(\lambda, \mu, \nu)$ with pseudo probability $q(\lambda, \mu, \nu)$ and sends them to the $n+t$ parties. The uncharacterised parties produce the outcomes via the response functions $p_{\lambda}(a_j|x_j)$ whereas the characterised ones produce locally the states $\rho_{\lambda, \mu, \nu}^{(i)}$. The non-signalling assemblage is then explained by Eq.~(\ref{NS-affine-prod-many}) as an affine combination of such local preparations. 
In both (a) and (b), all the non-classicality of the assemblage is contained in the negativity of the pseudo-probability distribution $q$. }
\label{f:negmod}
\end{figure}

%%%%%%%%%%%%%%%%%%%%%%%%%%%%%%%%%%%%%%%

\section{No-signalling assemblages}\label{sec3}

The formalism that we present in this work provides a unified framework for the study of no-signalling assemblages in general steering scenarios. In this section we will review the basics of no-signalling assemblages and relate them to the pseudo-LHS models from the previous section. 

A general assemblage that complies with the no signalling principle is defined as follows: 

\begin{defn}\label{ns}\textbf{No-signalling assemblage.} \\
An assemblage $\bsig_{A_1\ldots A_{n} | X_1 \ldots X_{n}}$ is no-signalling if it satisfies 
\begin{align}
\sum_{a_1 \ldots a_{n}} \sigma_{a_1 \ldots a_{n}|x_1 \ldots x_{n}} &= \rho_\mathrm{R} \quad \forall \, x_1 \ldots x_{n}  \label{eq:normalisation},
%\tr{\rho_\mathrm{R}}&=1\,,
\end{align}
where $\rho_\mathrm{R}$ is the (normalised) reduced state of the characterised party's system, 
and for every subset $\mathcal{S}=\{ i_1 \ldots i_r\}$ of $r$ uncharacterised parties, with $1 \leq r < n$,
\begin{align}
\sum_{{a_j ,\, j \not\in \mathcal{S}}} \sigma_{a_1 \ldots a_{n}|x_1 \ldots x_{n}} &= \sigma_{a_{i_1} \ldots a_{i_r}|x_{i_1} \ldots x_{i_r}} \quad \forall \, x_{i_1} \ldots x_{i_r}\,. \label{eq:NSall2one}
\end{align}
\end{defn}

Condition \eqref{eq:NSall2one} says that when disregarding the outcomes obtained by some uncharacterised parties, the state of the characterised party's subsystem should not depend on the choice of measurement of the disregarded parties. Moreover, when all the uncharacterised parties are traced out, condition \eqref{eq:normalisation} says that the  state of the characterised one should be a normalised quantum state equal to his subsystem's reduced state. We are now in a position to present one of our main results.

\begin{thm}\label{affine_NS}
Let $\bsig_{A_1\ldots A_{n} | X_1 \ldots X_{n}}$ be an assemblage in a steering scenario where $n$ uncharacterised parties steer a characterised one. The assemblage is no-signalling iff it has a pseudo-LHS model. 
\end{thm}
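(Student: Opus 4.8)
The plan is to prove the two implications separately. The direction ``pseudo-LHS model $\Rightarrow$ no-signalling'' is a direct verification; the converse is the substantive part and I would reduce it to the analogous statement for Bell scenarios (Ref.~\cite{tony}).

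For the forward direction, suppose $\bsig$ has a pseudo-LHS model as in \eqref{qm2}. Because each $p^{(j)}_\lambda(\cdot|x_j)$ is normalised, summing over the outcomes of any subset of the uncharacterised parties collapses the corresponding factors to $1$, so $\sum_{a_j,\,j\notin\mathcal S}\sigma_{a_1\ldots a_n|x_1\ldots x_n}=\sum_\lambda\big(\prod_{j\in\mathcal S}p^{(j)}_\lambda(a_j|x_j)\big)\,\sigma_\lambda$, which depends only on the inputs of the parties in $\mathcal S$; for $\mathcal S=\emptyset$ this gives $\sum_{\mathbf a}\sigma_{\mathbf a|\mathbf x}=\sum_\lambda\sigma_\lambda$, a unit-trace Hermitian operator that is in fact a genuine state because the elements of an assemblage are positive semidefinite. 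This is exactly \eqref{eq:normalisation}--\eqref{eq:NSall2one}.

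For the converse, assume $\bsig$ is no-signalling. Fix a basis $\{G_\alpha\}$ of the real vector space of Hermitian operators on Bob's Hilbert space with $\mathrm{tr}\{G_0\}=1$ and $\mathrm{tr}\{G_\alpha\}=0$ for $\alpha\geq 1$, and expand $\sigma_{\mathbf a|\mathbf x}=\sum_\alpha c^{\alpha}_{\mathbf a|\mathbf x}\,G_\alpha$ with real coefficients. Applying the coefficient functional of each $G_\alpha$ to the no-signalling relations \eqref{eq:normalisation}--\eqref{eq:NSall2one} shows that, for every fixed $\alpha$, the real array $\{c^{\alpha}_{\mathbf a|\mathbf x}\}$ lies in the no-signalling subspace of the corresponding $(n,m,d)$ Bell scenario (its marginals onto every proper subset of parties are independent of the discarded inputs), with $\sum_{\mathbf a}c^{0}_{\mathbf a|\mathbf x}=1$ and $\sum_{\mathbf a}c^{\alpha}_{\mathbf a|\mathbf x}$ equal to some input-independent constant for $\alpha\geq 1$. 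By the Bell-scenario result of Ref.~\cite{tony}, the local deterministic behaviours $D_{\mathbf f}(\mathbf a|\mathbf x)=\prod_{j}[a_j=f_j(x_j)]$, indexed by tuples $\mathbf f=(f_1,\ldots,f_n)$ of functions $f_j\colon\{1,\dots,m\}\to\{1,\dots,d\}$, affinely span the no-signalling affine subspace; passing to linear spans, every no-signalling array is a real linear combination of the $D_{\mathbf f}$. Hence $c^{\alpha}_{\mathbf a|\mathbf x}=\sum_{\mathbf f}s^{\alpha}_{\mathbf f}\,D_{\mathbf f}(\mathbf a|\mathbf x)$ for suitable $s^{\alpha}_{\mathbf f}\in\mathbb R$, and the normalisation of $c^{0}$ forces $\sum_{\mathbf f}s^{0}_{\mathbf f}=1$.

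Substituting and interchanging sums then yields $\sigma_{\mathbf a|\mathbf x}=\sum_{\mathbf f}D_{\mathbf f}(\mathbf a|\mathbf x)\,\omega_{\mathbf f}$ with $\omega_{\mathbf f}:=\sum_\alpha s^{\alpha}_{\mathbf f}\,G_\alpha$ Hermitian (not necessarily positive) and $\mathrm{tr}\{\omega_{\mathbf f}\}=s^{0}_{\mathbf f}$, so that $\mathrm{tr}\{\sum_{\mathbf f}\omega_{\mathbf f}\}=1$. Since each $D_{\mathbf f}$ factorises over the uncharacterised parties into the deterministic conditional distributions $a_j\mapsto[a_j=f_j(x_j)]$, identifying $\lambda$ with $\mathbf f$ and $\sigma_\lambda$ with $\omega_{\mathbf f}$ is precisely a pseudo-LHS model in the sense of Definition~\ref{qlhs} (one could equivalently first invoke Lemma~\ref{thm-NS-affine-prod}, but it is not needed here). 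The step I expect to be the main obstacle is exactly this reduction: one has to check carefully that extracting operator coefficients converts the assemblage no-signalling conditions into bona fide Bell no-signalling conditions for every $\alpha$, and keep track of the fact that for $\alpha\geq 1$ the total weight $\sum_{\mathbf a}c^{\alpha}_{\mathbf a|\mathbf x}$ is in general neither $1$ nor $0$ --- which is why one must work with the linear span, rather than the affine hull, of the deterministic behaviours.
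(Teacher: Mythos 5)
Your proof is correct, but it takes a genuinely different route from the paper's. The paper gives a self-contained explicit construction: each party's hidden variable ranges over the $md$ pairs $[a_j,x_j]$ plus a dummy symbol $\xi$, the response functions are the deterministic weights of Eq.~\eqref{weights}, and the hidden pseudo-states are the signed marginal assemblages $\sigma_{\blam}=(1-m)^{n-|S_{\blam}|}\,\sigma_{\mathbf{a}_{S_{\blam}}|\mathbf{x}_{S_{\blam}}}$; correctness is then verified by an induction on how many parties output the last outcome $d$. You instead expand the assemblage in a Hermitian operator basis $\{G_\alpha\}$ with a single unit-trace element, observe that each coefficient array inherits the Bell-type no-signalling constraints, and import the affine-decomposition result of Ref.~\cite{tony} --- upgraded to the linear span, which you correctly identify as necessary since the arrays with $\alpha\geq1$ carry arbitrary input-independent total weight --- to write each array over local deterministic behaviours; regrouping gives the pseudo-LHS model with $\lambda=\mathbf{f}$ and $\sigma_\lambda=\omega_{\mathbf f}$, and the unit-trace condition follows because any linear representation of the normalised array $c^0$ over normalised behaviours automatically has coefficients summing to one. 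Your reduction is shorter and more conceptual, at the price of leaning on the external Bell-scenario result that this theorem is meant to generalise; the paper's construction is longer but self-contained, and it produces the concrete hidden-variable model that is reused verbatim in the proof of Theorem~\ref{theOteo} to build the operator $\tilde{\rho}$ (your deterministic model indexed by function tuples $\mathbf{f}$ would serve that purpose equally well). The one step worth making explicit in your version is the passage from the affine hull to the linear span: an array of zero total weight is handled by adding and then subtracting a single deterministic behaviour, which is routine but should be said.
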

\begin{proof}
Given an assemblage with a psuedo local hidden state model, Eq. (\ref{qm2}) guarantees that it satisfies the no-signalling constraints, hence the first implication follows.

For the converse, let us assume that $\bsig_{A_1\ldots A_{n} | X_1 \ldots X_{n}}$ is no-signalling. For party $j \in \{1, \ldots, n\}$, define a local hidden variable $\lambda_j$, taking values in the set 
\begin{align}\label{eq:thelambda}
\Lambda_j = \left\{ [a_j, x_j] \right\}_{a_j,x_j} \cup \{\xi\} \,,
\end{align}
i.e. the set of ordered pairs $[a_j, x_j]$ in union with a single-element set composed of an arbitrary dummy symbol, denoted by $\xi$. There are $m\, d$ pairs $[a_j, x_j]$, so $|\Lambda_j|=m\, d+1$. 

Then, take the local hidden variable $\lambda$ of Eq.~\eqref{qm2} as the tuple $\blam := (\lambda_1, \ldots, \lambda_n)$, and in turn define the weights in decomposition  (\ref{qm2})  as
\begin{equation}\label{weights}
p_{\lambda_j}^{(j)}(a_j|x_j) = \begin{cases} \delta_{\lambda_j, [a_j, x_j]} &\mathrm{if} \quad a_j<d \\ 1-\sum_{\tilde{a}<d} \delta_{\lambda_j, [\tilde{a}, x_j]} &\mathrm{if} \quad  a_j=d.\end{cases}
\end{equation}
These are well defined conditional probability distributions of every $\lambda_j$ and party $j$, since $\sum_{a_j} p_{\lambda_j}^{(j)}(a_j|x_j) = 1$. 

Given the global hidden variable $\blam$, define $S_{\blam}$ to be the set of indices $\{j \, : \, \lambda_j \neq \xi \}$. With this, define the hidden pseudo-states as 
\begin{align}\label{mpqs}
\sigma_{\blam} &:= (1-m)^{n-|S_{\blam}|} \, \sigma_{\mathbf{a}_{S_{\blam}} | \mathbf{x}_{S_{\blam}}},
\end{align} 
where the $\mathbf{a}_{S_{\blam}}$ and $\mathbf{x}_{S_{\blam}}$ involve the parties that belong to the set $S_{\blam}$, i.e. those whose hidden variable does not take the dummy value $\xi$. 
For instance, when $|S_{\blam}|=n$, 
$$
\sigma_{[a_1,x_1],\ldots,[a_n,x_n]} = \sigma_{a_1\ldots a_n | x_1 \ldots x_n}\,,
$$ 
and when $|S_{\blam}|=n-1$ with $\lambda_1 = \xi$
$$
\sigma_{\xi,[a_2,x_2],\ldots,[a_n,x_n]} = (1-m) \,  \sigma_{a_2\ldots a_n | x_2 \ldots x_n}\,.
$$ 
Note that $\sigma_{\ba_{S_{\blam}} | \bx_{S_{\blam}}}$ is well-defined since the original assemblage is no-signalling, and $\sigma_{\ba_{S_{\blam}} | \bx_{S_{\blam}}}$ arises from it by tracing out the parties that are not in $S$.

Now we need to prove that these $\sigma_{\blam}$ are suitably normalised and that, together with the $p_{\lambda_j}^{(j)}(a_j|x_j) $ from Eq.~\eqref{weights}, they reproduce the assemblage. 
For the former:
\begin{align*}
\tr{\sum_{\blam} \sigma_{\blam} } &= \tr{\sum_{S \subseteq \{1, \dots, {n}\}} \sum_{\ba_S, \bx_S} (1-m)^{n-|S|} \, \sigma_{\ba_S | \bx_S} }\\
&= \tr{\sum_{S \subseteq \{1, \dots, {n} \}} (1-m)^{n-|S|} m^{|S|}\rho_{\mathrm{R}}} \\
&=\sum_{S \subseteq \{1, \dots, n\}} (1-m)^{n-|S|} m^{|S|}\\
&= \sum_{r = 0}^{n} \binom{n}{r} (1-m)^{n-r} m^{r}\\
&= 1.
\end{align*}

For the last part, we will first show that this pseudo-LHS model recovers the assemblage for the cases when $a_j <d$ for every party $j$. Then we will prove that the statement also holds when some of the outcomes have value $d$, by induction.
For the former, by definition,
\begin{align*}
\sum_{\blam} \, p_{\lambda_1}^{(1)}(a_1|x_1) \, \ldots p_{\lambda_{n}}^{(n)}(a_{n}&|x_{n}) \, \sigma_{\blam} = \sum_{\blam} \,\delta_{\lambda_1, [a_1, x_1]}  \, \ldots \, \delta_{\lambda_{n}, [a_{n}, x_{n}]} \, \sigma_{\blam}\,.
\end{align*}
Since $\lambda_j = \xi \Rightarrow \delta_{\lambda_j, [a_j, x_j]} = 0$, the only non-trivial terms in the sum are those where no party's hidden variable takes the dummy value, and for those, $\sigma_{[a_1,x_1] \ldots [a_{n},x_{n}]} = \sigma_{a_1 \ldots a_n | x_1 \ldots x_n}$. Therefore, 
\begin{align*}
\sum_{\blam} \,\delta_{\lambda_1, [a_1, x_1]}  \, \ldots  \, \delta_{\lambda_{n}, [a_{n}, x_{n}]} \, \sigma_{\blam} &= 
\sum_{[\tilde{a}_1,\tilde{x}_1],\ldots,[\tilde{a}_{n},\tilde{x}_{n}]} \,\delta_{[\tilde{a}_1,\tilde{x}_1], [a_1, x_1]}  \, \ldots \, \delta_{[\tilde{a}_{n},\tilde{x}_{n}], [a_{n}, x_{n}]} \, \sigma_{\tilde{a}_1 \ldots \tilde{a}_{n} | \tilde{x}_1 \ldots \tilde{x}_{n}} \\
&=\sigma_{a_1 \ldots a_{n} | x_1 \ldots x_{n}}.
\end{align*}
Now let us assume that the pseudo-LHS model reproduces the assemblage when the first $k$ uncharacterised parties obtain outcomes $a_j=d$. That is, we assume that
\begin{align}\label{eq:k}
\sigma_{\!\!\underbrace{{\scriptstyle d \ldots d}}_{k \text{ times}} \!\!\! a_{k+1} \ldots a_n | x_1 \ldots x_n} = \sum_{\blam}\, p_{\lambda_1}^{(1)}(d|x_1) \,  \ldots p_{\lambda_k}^{(k)}(d|x_k) p_{\lambda_{k+1}}^{(k+1)}(a_{k+1}|x_{k+1}) \, \ldots  p_{\lambda_{n}}^{(n)}(a_{n}&|x_{n}) \, \sigma_{\blam} .
\end{align}
Hence, when now the first $k+1$ parties obtain outcome $d$ it follows that:
\begin{align}\label{eq:k+1}
&\sum_{\blam}\, \prod_{j=1}^{k+1}p_{\lambda_j}^{(j)}(d|x_j)\prod_{l=k+2}^{n}p_{\lambda_{l}}^{(l)}(a_{l}|x_{l})\sigma_{\blam} \nonumber\\
=&\sum_{\blam}\, \prod_{j=1}^{k}p_{\lambda_j}^{(j)}(d|x_j)\left( 1 - \sum_{\tilde{a}_{k+1} < d} p_{\lambda_{k+1}}^{(k+1)}(\tilde{a}_{k+1}|x_{k+1})  \right) \prod_{l=k+2}^{n}p_{\lambda_{l}}^{(l)}(a_{l}|x_{l}) \sigma_{\blam} \nonumber\\
=&\sum_{\blam}\, \prod_{j=1}^{k}p_{\lambda_j}^{(j)}(d|x_j)\prod_{l=k+2}^{n}p_{\lambda_{l}}^{(l)}
(a_{l}|x_{l}) \sigma_{\blam} - \sum_{\tilde{a}_{k+1} < d} \sigma_{\!\!\underbrace{{\scriptstyle d \ldots d}}_{k \text{ times}} \!\!\! \tilde{a}_{k+1} \ldots a_n | x_1 \ldots x_n} \nonumber\\
=&m\, \sigma_{\!\!\underbrace{{\scriptstyle d \ldots d}}_{k \text{ times}} \!\!\! a_{k+2} \ldots a_n | x_1 \ldots x_k x_{k+2} \ldots  x_n} + (1-m)\,\sigma_{\!\!\underbrace{{\scriptstyle d \ldots d}}_{k \text{ times}} \!\!\! a_{k+2} \ldots a_n | x_1 \ldots x_k x_{k+2} \ldots  x_n} - \sum_{\tilde{a} < d} \sigma_{\!\!\underbrace{{\scriptstyle d \ldots d}}_{k \text{ times}} \!\!\! \tilde{a}_{k+1} \ldots a_n | x_1 \ldots x_n} \nonumber\\
=&\sigma_{\!\!\!\!\underbrace{{\scriptstyle d \ldots d}}_{k+1 \text{ times}} \!\!\!\!\!\! a_{k+1} \ldots a_n | x_1 \ldots x_n}\,,
\end{align}
This may be understood as follows: if the pseudo-LHS model reproduces the elements of the assemblage where the first $k$ uncharacterised parties obtain outcome $d$ (Eq.~\eqref{eq:k}), it also reproduces the elements of the assemblage where the first $k+1$ uncharacterised parties obtain outcome $d$ (Eq.~\eqref{eq:k+1}), and this holds for any value of $k$. 
The argument for any other subset of $k+1$ parties that is not necessarily $\{ 1, \ldots , k+1\}$ follows similarly. 

Therefore, if the pseudo-LHS model reproduces the elements of the assemblage where $k$ uncharacterised parties obtain outcome $d$, it also reproduces the elements of the assemblage where $k+1$ uncharacterised parties have outcome $d$. By induction it follows that the pseudo-LHS model recovers the full assemblage for every value of $a_1 \ldots a_n$. 
\end{proof}

\section{A formalism for non-signalling steering}
\label{se:for}

In this section, we develop a formalism for non-signalling steering, similar to the one presented in \cite{toni} (see also \cite{barnum}) for non-signalling correlations in Bell scenarios. 

\begin{thm}\label{theOteo}
Let $\bsig_{A_1\ldots A_{n} | X_1 \ldots X_{n}}$ be an assemblage in a steering scenario where $n$ uncharacterised parties steer a characterised one (labelled $B$). The assemblage is no-signalling iff there exist  POVM elements $M^{(j)}_{a_j|x_j}$ for each uncharacterised party $j$ (i.e. positive operators satisfying $\sum_{a_j} M^{(j)}_{a_j|x_j} = \mathbbm{1}$) and a unit trace Hermitian operator $\tilde{\rho}$ such that:
\begin{equation}\label{Ogen}
\sigma_{a_1 \ldots a_{n} | x_1 \ldots x_{n}} = \Tr{ \left( M^{(1)}_{a_1|x_1} \otimes \ldots \otimes M^{(n)}_{a_{n}|x_{n}} \otimes \mathbbm{1} \right) \, \tilde{\rho}}{1, \ldots n},
\end{equation}
where the partial trace involves the $n$ uncharacterised subsystems (see Fig.~\ref{fig:processing}.(a)).
\end{thm}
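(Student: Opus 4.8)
The plan is to deduce this from Theorem~\ref{affine_NS} (equivalently Lemma~\ref{thm-NS-affine-prod}) by \emph{quantising} a pseudo-LHS model: encode the classical hidden variable of each uncharacterised party into an orthonormal basis of an auxiliary Hilbert space, turn the classical response functions into diagonal POVMs, and absorb the hidden pseudo-states into a block-diagonal (and in general non-positive) operator $\tilde\rho$.

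For the ``if'' direction I would simply verify the conditions of Definition~\ref{ns} starting from Eq.~\eqref{Ogen}. Summing the outcomes $a_j$ over any subset of uncharacterised parties and using $\sum_{a_j} M^{(j)}_{a_j|x_j}=\mathbbm{1}$ replaces the corresponding tensor factors by identities, so the resulting marginal is independent of the inputs $x_j$ of the summed-out parties; this is exactly Eq.~\eqref{eq:NSall2one}, and summing over all outcomes gives $\rho_{\mathrm R}=\Tr{\tilde\rho}{1,\ldots,n}$, a unit-trace Hermitian operator independent of all inputs, i.e.\ Eq.~\eqref{eq:normalisation}.

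For the ``only if'' direction I would start from a no-signalling assemblage, invoke Theorem~\ref{affine_NS} to write
\begin{equation*}
\sigma_{a_1\ldots a_n|x_1\ldots x_n}=\sum_{\blam} p^{(1)}_{\lambda_1}(a_1|x_1)\cdots p^{(n)}_{\lambda_n}(a_n|x_n)\,\sigma_{\blam},
\end{equation*}
with $\blam=(\lambda_1,\dots,\lambda_n)$, each $\lambda_j$ ranging over the finite set $\Lambda_j$ of Eq.~\eqref{eq:thelambda}, each $p^{(j)}_{\lambda_j}(\cdot|x_j)$ a genuine probability distribution, each $\sigma_\blam$ Hermitian (being, up to a real factor, a marginal of the assemblage), and $\tr{\sum_\blam\sigma_\blam}=1$. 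Then, for each uncharacterised party $j$, I would pick a Hilbert space $\mathcal H_j$ with orthonormal basis $\{\,|\lambda_j\rangle\,\}_{\lambda_j\in\Lambda_j}$ and set
\begin{equation*}
M^{(j)}_{a_j|x_j}:=\sum_{\lambda_j\in\Lambda_j} p^{(j)}_{\lambda_j}(a_j|x_j)\,|\lambda_j\rangle\!\langle\lambda_j|,\qquad
\tilde\rho:=\sum_{\blam} |\lambda_1\rangle\!\langle\lambda_1|\otimes\cdots\otimes|\lambda_n\rangle\!\langle\lambda_n|\otimes\sigma_\blam .
\end{equation*}
Each $M^{(j)}_{a_j|x_j}$ is positive with $\sum_{a_j}M^{(j)}_{a_j|x_j}=\mathbbm{1}$; $\tilde\rho$ is Hermitian with $\tr{\tilde\rho}=\sum_\blam\tr{\sigma_\blam}=1$; and since $\langle\lambda_j|M^{(j)}_{a_j|x_j}|\lambda_j\rangle=p^{(j)}_{\lambda_j}(a_j|x_j)$, substituting these into the right-hand side of Eq.~\eqref{Ogen} returns the displayed decomposition, hence the assemblage.

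I do not expect a serious obstacle once Theorem~\ref{affine_NS} is available: this is the natural classical-to-quantum dilation, and the only delicate points are bookkeeping ones — checking the POVM normalisations and that $\tilde\rho$ is simultaneously Hermitian and of unit trace. The conceptually essential feature to preserve is that $\tilde\rho$ is allowed to be non-positive: a positive-semidefinite $\tilde\rho$ would, by Eq.~\eqref{eq:quantumassemblage}, only ever reproduce quantum assemblages, so the relaxation embodied in the pseudo-LHS model (Definition~\ref{qlhs}) is precisely what lets the construction reach every no-signalling assemblage.
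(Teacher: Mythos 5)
Your proposal is correct and follows essentially the same route as the paper: both directions are handled identically, with the converse obtained by dilating the pseudo-LHS model of Theorem~\ref{affine_NS} into diagonal POVMs $M^{(j)}_{a_j|x_j}=\sum_{\lambda_j}p^{(j)}_{\lambda_j}(a_j|x_j)\ket{\lambda_j}\bra{\lambda_j}$ and a block-diagonal Hermitian $\tilde\rho$. The only (immaterial) difference is that you place the hidden pseudo-states $\sigma_{\blam}$ directly in the blocks of $\tilde\rho$, whereas the paper first re-expands them via Lemma~\ref{thm-NS-affine-prod} as $\sum_\mu q(\blam,\mu)\rho_{\blam,\mu}$ before summing over $\mu$ — the resulting operator is the same.
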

\begin{proof}
If an assemblage can be written as in Eq. (\ref{Ogen}), it is straightforward to see that it is no-signalling. The `only-if' part of the proof relies on the constructions of Thm. \ref{affine_NS} and Lem. \ref{thm-NS-affine-prod}, as we explicitly show in what follows. 

First, write the no-signalling assemblage as an affine combination of quantum states, as in Lemma \ref{thm-NS-affine-prod} by further using the hidden variable model from Theorem \ref{affine_NS}: 
\begin{align}
\sigma_{a_1 \ldots a_{n} | x_1 \ldots x_{n}} = \sum_{\blam, \mu} q(\blam, \mu)p_{\lambda_1}^{(1)}(a_1|x_1) \ldots p_{\lambda_n}^{(n)}(a_n|x_n)  \,\rho_{\blam, \mu} \,,
\end{align}
with $p_{\lambda_j}^{(j)}(a_j|x_j) $ as in Eq. (\ref{weights}). 

Then, assign to each uncharacterised party $j$ an $(md+1)$-dimensional Hilbert space spanned by the orthonormal basis $\{ \ket{\lambda_j} \, : \, \lambda_j \in \Lambda_j\}$, where $\Lambda_j$ defined in Eq.~\eqref{eq:thelambda} is the set of values that the hidden variable for party $j$ can take. Define
\begin{equation}\label{trho}
\tilde{\rho} := \sum_{\blam,\mu}\, q(\blam, \mu) \, \ket{\lambda_1} \bra{\lambda_1} \otimes \ldots \otimes \ket{\lambda_{n}} \bra{\lambda_{n}} \otimes\rho_{\blam, \mu} \,,
\end{equation}
and
\begin{equation}
M^{(j)}_{a_j|x_j}  := \sum_{\lambda_j}   p_{\lambda_j}^{(j)}(a_j|x_j)  \ket{\lambda_j} \bra{\lambda_j} \,.
\end{equation}

Since the $\{ \ket{\lambda_j} \}$ bases are orthonormal, it follows by direct calculation that one correctly obtains a pseudo-LHS model for the desired assemblage. 
\end{proof}

Here, this Hermitian operator $\tilde{\rho}$ plays the role of the operator $O$  in \cite{toni}. 
Note that for a given assemblage, the choice of $\tilde{\rho}$ is not unique. The construction presented in Theorem \ref{theOteo} produces a specific $\tilde{\rho}$ which works in all situations. 

By definition, $\tilde{\rho}$ in Eq.~(\ref{Ogen}) can be chosen to be positive semidefinite if and only if the assemblage is quantum. 
On the other hand, it also follows that an assemblage has an LHS model if and only if  $\tilde{\rho}$  can be chosen to be a fully $(n+1)$-separable quantum state across the multipartition $A_1 | \ldots | A_n | B$.

Once the nature of the assemblages is identified with the properties of the operator $\tilde{\rho}$, one can study the families of assemblages for different families of $\tilde{\rho}$ that have particular properties. Of particular interest is the set of \textit{Gleason assemblages}, which contains the set of quantum assemblages. 
In analogy with the Gleason correlations of Ref. \cite{toni}, Gleason assemblages are those that arise when $\tilde{\rho}$ can be chosen to be a Hermitian operator $W$ that produces well-defined local measurements by the uncharacterised parties. This is a stronger requirement than that imposed in general by Theorem~\ref{theOteo}, where the operator $\tilde{\rho}$ need only produce valid assemblages for the specific measurements $M_{a_j|x_j}^{(j)}$. 
A necessary and sufficient condition is that $W$ is an entanglement witness with respect to the $(n+1)$-partition $\mathrm{A}_1\, | \, \mathrm{A}_2 \, | \, \ldots \, | \, \mathrm{A}_{n} \, | \,B$ \cite{GTB05,AMD15}. 

Since we demand that $W$ defines valid assemblages for \textit{all} local measurements (not just some particular subset of measurement), the set of Gleason assemblages is in general smaller than the no-signalling set. Also, since $W$ may be non-positive, the set of Gleason assemblages is in general larger than the quantum set.

One can see that for bipartite steering scenarios, the set of Gleason assemblages coincides with both the quantum and the no-signalling set. Following Ref.~\cite{toni}, this can be seen by considering that any unit-trace bipartite entanglement witness $W_{AB}$ can be expressed as the action on the steering side of a trace-preserving positive map $\mathcal{E}$ on a bipartite normalized quantum state, $W_{AB}=(\mathcal{E}_A\otimes\id_B)[\rho_{AB}]$. Hence,
\begin{align*}
\sigma_{a|x}
&=\Tr{(M_{a|x}\otimes \openone_B )W_{AB}}{A}\\
&=\Tr{(\mathcal{E}^\dagger [M_{a|x}]\otimes \openone_B )\rho_{AB}}{A}\\
\end{align*}
with $\mathcal{E}^\dagger$, the dual of $\mathcal{E}$, a positive unital map, so that $\{(\mathcal{E}^\dagger [M_{a|x}]\}_{a}$ is also a POVM for all $x$. 
However, for steering scenarios with more than one uncharacterised party this is no longer the case, as we see next.

\begin{ex}\label{ex:wit}
Consider the four three-qubit states:
\begin{equation}
\label{eq:upb_four_states}
\ket{000}, \quad \ket{1e^{\perp} e}, \quad \ket{e1e^\perp}, \quad \ket{e^\perp e 1}\,,
\end{equation}
where $\{\ket{e}, \ket{e^\perp}\}$ is an arbitrary basis different from $\{\ket{0},\ket{1}\}$. Denote by $\Pi_{\mathrm{UPB}}$ the projector onto the subspace spanned by all four states in Eq.~\eqref{eq:upb_four_states}. Construct now the tripartite entanglement witness 
\begin{equation}
W = \frac{1}{4-8\epsilon} \, \left( \Pi_{\mathrm{UPB}} - \epsilon \mathbbm{1} \right),
\end{equation}
where $\epsilon = \min_{\ket{\alpha \beta \gamma}} \bra{\alpha \beta \gamma}\Pi_{\mathrm{UPB}} \ket{\alpha \beta \gamma}$, with $\ket{\alpha}$, $\ket{\beta}$, and $\ket{\gamma}$ arbitrary single qubit states and $\ket{\alpha \beta \gamma}:=\ket{\alpha}\otimes\ket{\beta}\otimes\ket{\gamma}$. 
Define now the assemblage:
\begin{equation}
\sigma_{a_1a_2|x_1x_2} := \mathrm{tr}_{1,2} \left\{ (M^{(1)}_{a_1|x_1} \otimes  M^{(2)}_{a_2|x_2}  \otimes \mathbbm{1}) W     \right\},
\end{equation}
where $M^{(i)}_{1|1}=\ket{0}\bra{0}$, $M^{(i)}_{2|1}=\ket{1}\bra{1}$, $M^{(i)}_{1|2}=\ket{e}\bra{e}$, and $M^{(i)}_{2|2}=\ket{e^\perp}\bra{e^\perp}$, for $i=1,2$.

This assemblage is post-quantum, since by Bob performing measurements in the same basis as the Alices, one obtains supra-quantum correlations $p(a_1,a_2,b|x_1,x_2,y)$, as proven in \cite{toni}. Hence, already for the simplest multipartite case, the set of Gleason assemblages is larger than the quantum one. 
\end{ex}

\begin{rem}
Consider an arbitrary Gleason assemblage in a steering scenario where two uncharacterised parties steer a characterised one. This has the form 
$$
\sigma_{a_1a_2|x_1x_2} = \mathrm{tr}_{1, 2} \left\{( M^{(1)}_{a_1|x_1} \otimes  M^{(2)}_{a_2|x_2} \otimes  \mathbbm{1} )W     \right\}.
$$ 
If we now trace out the steered party we have that
$$p(a_1,a_2|x_1,x_2) = \mathrm{tr} \left\{( M^{(1)}_{a_1|x_1} \otimes  M^{(2)}_{a_2|x_2} ) W^{(12)} \right\},$$ where $W^{(12)}$ is an entanglement witness for Alice's two subsystems. Such $p(a_1,a_2|x_1,x_2)$ belong to the so called set of Gleason correlations \cite{toni}, which for bipartite Bell scenarios coincides with quantum correlations. Hence, $p(a_1,a_2|x_1,x_2)$ are quantum correlations.

Therefore, we see that Gleason assemblages, even if post-quantum, only generate quantum correlations between the two uncharacterised parties. 
Note however that when considering the full tripartite Bell scenario that includes Bob (i.e. not tracing him out) the correlations may be post-quantum. Hence, the post-quantumness of the assemblage may nevertheless be certified in a Bell experiment. 
\end{rem}

A natural question is whether post-quantum steering is a new phenomenon in its own right, or if it is just another consequence of post-quantum non-locality. In Ref. \cite{PQP}, the authors show the former to be the case. They gave an example of a post-quantum assemblage in a tripartite steering scenario with two uncharacterised parties, which cannot give rise to post-quantum non-locality in a tripartite Bell scenario, where the characterised party performs any set of measurements on their system. 

In the next section, we  use the relation between Gleason assemblages and entanglement witnesses to provide a general construction for post-quantum steering that never gives rise to post-quantum nonlocal correlations. This is the first general construction of this type, and sheds the first light on the structure underlying post-quantum steering without post-quantum nonlocality. 

\section{Post-quantum steering from positive maps}
\label{sec5}

Here we present a method for generating post-quantum assemblages without post-quantum Bell non-locality. The insight we use is the fact that positive, but not completely positive, maps are in correspondence with entanglement witnesses. 
We will see that starting from this perspective, we can identify a subset of Gleason assemblages which cannot give rise to post-quantum Bell non-locality. Furthermore, by checking simple examples of positive maps, we find that we indeed produce post-quantum steering, and hence that there is a link between positive maps and post-quantum steering. 

We may obtain a Hermitian operator $\tilde{\rho}$ to be used in Eq.~\eqref{Ogen} by acting partially on a quantum state with a positive trace-preserving (PTP) map that is not completely positive (CP). More in detail, consider a quantum state $\rho$ shared by $n+1$ parties, and define the map:
\begin{align*}
\mathcal{E}[\cdot ] := \mathcal{I}^{(1)} \otimes \cdots \otimes \mathcal{I}^{(n)} \otimes \Lambda^{(B)} [\cdot]\,,
\end{align*}
where $\Lambda^{(B)}[\cdot]$ is a PTP map. If $\Lambda^{(B)}[\cdot]$ is not CP, $\mathcal{E}[\rho]$ may be not positive semi-definite. Nevertheless, the conditional states
\begin{align}\label{eq:assemchan}
\sigma_{a_1 \ldots a_n | x_1 \ldots x_n}
&:= \Tr{( M^{(1)}_{a_1|x_1} \otimes \ldots \otimes M^{(n)}_{a_{n}|x_{n}} \otimes \id^{(B)}) \mathcal{E}[\rho]}{1, \ldots n}\\
&=\Lambda^{(B)}\left[\sigma^{\rm{Q}}_{a_1 \ldots a_n | x_1 \ldots x_n}\right]
\end{align}
form a well-defined assemblage (i.e. with $\sigma_{a_1 \ldots a_n | x_1 \ldots x_n}\geq0$). Here, $\sigma^{\rm{Q}}_{a_1 \ldots a_n | x_1 \ldots x_n} := \Tr{( M^{(1)}_{a_1|x_1} \otimes \ldots \otimes M^{(n)}_{a_{n}|x_{n}} \otimes \id^{(B)} ) \rho}{1, \ldots, n}$ are the elements of the assemblage obtained by the measurements of the Alices acting on $\rho$ rather than on $\mathcal{E}[\rho]$, and by construction they constitute a quantum assemblage. In other words, assemblages $\bsig_{A_1\ldots A_{n} | X_1 \ldots X_{n}}$ arising from this construction can always be thought of as being generated from a quantum one $\bsig^{\rm{Q}}_{A_1\ldots A_{n} | X_1 \ldots X_{n}}$ by the application of a PTP map $\Lambda^{(B)}[\cdot]$ on the characterised party.

Now, note that 
\begin{align}
\nonumber
p(a_1 \ldots a_n \,b| x_1 \ldots x_n \,y)   &= \Tr{( M^{(1)}_{a_1|x_1} \otimes \ldots \otimes M^{(n)}_{a_{n}|x_{n}} \otimes M^{(B)}_{b|y} ) \mathcal{E}[\rho]}{}\\
\nonumber
&= \Tr{\mathcal{E}^\dagger\left[ M^{(1)}_{a_1|x_1} \otimes \ldots \otimes M^{(n)}_{a_{n}|x_{n}} \otimes M^{(B)}_{b|y}\right] \rho}{}\\
&= \Tr{ M^{(1)}_{a_1|x_1} \otimes \ldots \otimes M^{(n)}_{a_{n}|x_{n}} \otimes {\Lambda^{\dagger(B)}}\left[M^{(B)}_{b|y}\right] \rho}{}
\end{align}
are correlations that have a quantum realisation, for any set of POVMs $\{M_{b|y}^{(B)}\}_{b,y}$ for Bob. This is due to the fact that the dual map $\mathcal{E}^\dagger[\cdot]:=\mathcal{I}^{(1)}\otimes \cdots \otimes \mathcal{I}^{(n)} \otimes \Lambda^{\dagger(B)}[\cdot]$, with $\Lambda^{\dagger(B)}[\cdot]$ the dual of $\Lambda^{(B)}[\cdot]$, factorises into a tensor product of local maps each of which is unital, since $\Lambda^{(B)}[\cdot]$ is trace-preserving. Hence, it maps each tensor product of local POVM elements $M^{(1)}_{a_1|x_1} \otimes \ldots \otimes M^{(n)}_{a_{n}|x_{n}} \otimes M^{(B)}_{b|y}$ to a tensor-product of local POVM elements $M^{(1)}_{a_1|x_1} \otimes \ldots \otimes M^{(n)}_{a_{n}|x_{n}} \otimes \Lambda^{\dagger(B)}\left[M_{b|y}^{(B)}\right]$. 

Thus, assemblages that are constructed in this way can only produce quantum correlations by construction. 
In the following subsection, we discuss how the the properties of $\Lambda^{(B)}[\cdot]$ impart properties onto  $\bsig_{A_1\ldots A_{n} | X_1 \ldots X_{n}}$. 

\subsection{Decomposable PTP maps}
\label{sec:decomposable}

A crucial property of a map for our purposes is the notion of decomposability. A map $\Lambda^{(B)}[\cdot]$ is said to be decomposable whenever it admits a decomposition as $\Lambda^{(B)}[\cdot] = \Lambda_{1}[\cdot] + T \circ \Lambda_{2}[\cdot]$, where $T[\cdot]$ denotes the transposition map~\footnote{Transposition is defined with respect to some chosen local basis; such choice is irrelevant for our purposes as transposition maps in different bases are unitarily related.} and $\Lambda_{1}[\cdot]$ and $\Lambda_{2}[\cdot]$ are CP maps. If $\Lambda^{(B)}$ is trace preserving -- like in the case we are interested in -- then the two CP maps $\Lambda_{1}$ and $\Lambda_2$ form an instrument, that is, $\Lambda_1+\Lambda_2$, besides being obviously completely positive, is also trace preserving. If $\Lambda^{(B)}[\cdot]$ is decomposable, the assemblage it generates via Eq. \eqref{eq:assemchan} is always quantum, no matter which initial quantum assemblage is used, as we are about to prove. 

First, note that the transposition map cannot generate a post-quantum assemblage. This follows from the fact that 
\begin{align*}
&\Tr{ (M^{(1)}_{a_1|x_1} \otimes \ldots \otimes M^{(n)}_{a_{n}|x_{n}} \otimes \id^{(B)}) \rho^{T_B}}{1, \ldots, n} \\
&=\Tr{ (M^{(1)}_{a_1|x_1} \otimes \ldots \otimes M^{(n)}_{a_{n}|x_{n}} \otimes \id^{(B)})^{T_{1\ldots n}} \rho^{T}}{1, \ldots, n} \\
&=\Tr{ (M^{(1)T_1}_{a_1|x_1} \otimes \ldots \otimes M^{(n)T_n}_{a_{n}|x_{n}} \otimes \id^{(B)}) \rho^{T}}{1, \ldots, n}\\
&=\Tr{ (M^{\prime (1)}_{a_1|x_1} \otimes \ldots \otimes M^{\prime (n)}_{a_{n}|x_{n}} \otimes \id^{(B)}) \rho^{\prime}}{1, \ldots, n}\,,
\end{align*} 
where $T_B$ and $T$ denote partial transposition over Bob's subsystem and global transposition over all systems, respectively, $\{M^{\prime (k)}_{a_k|x_k} := M^{(k)T_k}_{a_k|x_k}\}$ are  POVMs, and $\rho^{\prime} := \rho^{T}$ is a quantum state. Hence, the assemblage obtained by local measurements of the steering parties on a partially transposed (on the steered party) quantum state, admits a fully quantum realization.

Now consider a generic decomposable PTP map $\Lambda^{(B)}[\cdot] = \Lambda_{1}[\cdot] + T \circ \Lambda_{2}[\cdot]$, and an arbitrary quantum assemblage $\bsig^{\rm{Q}}_{A_1\ldots A_{n} | X_1 \ldots X_{n}}:=\{\sigma^{\rm{Q}}_{a_1 \ldots a_n | x_1 \ldots x_n}\}$. Then, 
\begin{align}
\label{eq:ass_decomp}
\nonumber
\sigma_{a_1 \ldots a_n | x_1 \ldots x_n} &=\Lambda^{(B)}[\sigma^{\rm{Q}}_{a_1 \ldots a_n | x_1 \ldots x_n}] \\ 
&= \Lambda_{1}[\sigma^{\rm{Q}}_{a_1 \ldots a_n | x_1 \ldots x_n}] + T \circ \Lambda_{2} \nonumber [\sigma^{\rm{Q}}_{a_1 \ldots a_n | x_1 \ldots x_n}] \\
&= p \sigma^{\rm{Q}_1}_{a_1 \ldots a_n | x_1 \ldots x_n} + (1-p)  \sigma^{\rm{Q}_2}_{a_1 \ldots a_n | x_1 \ldots x_n}\,,
\end{align}
where 
\begin{align*}
p &:= \tr{\sum_{a_1, \ldots, a_n} \Lambda_{1}[\sigma^{\rm{Q}}_{a_1 \ldots a_n | x_1 \ldots x_n}]}{}\,,\\
\sigma^{\rm{Q}_1}_{a_1 \ldots a_n | x_1 \ldots x_n} &:= \frac{\Lambda_{1}[\sigma^{\rm{Q}}_{a_1 \ldots a_n | x_1 \ldots x_n}]}{p}\,,\\
\sigma^{\rm{Q}_2}_{a_1 \ldots a_n | x_1 \ldots x_n} &:= \frac{T \circ \Lambda_{2}[\sigma^{\rm{Q}}_{a_1 \ldots a_n | x_1 \ldots x_n}]}{1-p}\,.
\end{align*}
Since $\bsig^{\rm{Q}}_{A_1\ldots A_{n} | X_1 \ldots X_{n}}$ is a quantum assemblage and $\Lambda^{(B)}[\cdot]$ is PTP, $p$ is a valid probability, i.e., $p\in[0,1]$. This, together with the fact that $\Lambda_{1}[\cdot]$ and $\Lambda_{2}[\cdot]$ are CP (trace-non-increasing) maps and that transposition preserves quantum assemblages, implies that both $\bsig^{\rm{Q}_1}_{A_1\ldots A_{n} | X_1 \ldots X_{n}}$ and $\bsig^{\rm{Q}_2}_{A_1\ldots A_{n} | X_1 \ldots X_{n}}$ are quantum assemblages. By convexity of the set of assemblages, it follows then that the assemblage $\bsig_{A_1\ldots A_{n} | X_1 \ldots X_{n}}$ in Eq. \eqref{eq:ass_decomp} is a quantum assemblage too. A direct consequence of this is that no positive PTP maps from qubits to qubits~\footnote{Or from qubits to qutrits, or qutrits to qubits, for that matter.} can generate post-quantum assemblages by the above construction, since all such maps are decomposable~\cite{stormer,woronowicz}.

\subsection{Non-decomposable PTP maps and examples of post-quantum steering}\label{nondecomp}

The observation of Section \ref{sec:decomposable} demonstrates that, if we want to find examples of post-quantum steering by means of the application of positive maps to quantum states, then we must focus on non-decomposable maps.

The question that remains to be answered is whether there exist non-decomposable PTP maps that produce assemblages which are post-quantum. In this section we will provide such an example.

Consider a steering scenario with two uncharacterised parties, who can choose among two dichotomic measurements each. The characterised party will be taken to have a Hilbert space of dimension four. 

We first define a quantum assemblage, assuming that the uncharacterised parties each hold qubits,  i.e. the shared system consists of two qubits and a ququart. The shared state is $\rho = \ket{\Psi}\bra{\Psi}$, where
\begin{align}
\ket{\Psi} = \frac{\ket{\Psi_1}+i\,\ket{\Psi_2}-\ket{\Psi_3}}{\sqrt{14}}\,,
\end{align}
with
\begin{align*}
\ket{\Psi_k} = \sum_{\substack{a_1,a_2,b,b' \in \{0,1\}, \\ a_1+a_2+b+b'= k}} \ket{a_1\, a_2\, b\, b'}\,\, \text{ for } k=1,2,3,
\end{align*}
and where we have introduced the shorthand notation $\ket{a_1\, a_2\, b\, b'} := \ket{a_1}_{A_1} \otimes \ket{a_2}_{A_2} \otimes \ket{b\,b'}_B$. 

The measurements the uncharacterised parties perform on their qubits are: 
\begin{equation}
\begin{aligned}
%\begin{aligned}
M_{a_1|0}^{(1)}& =\frac{\id  + (-1)^{a_1} \, X}{2}&
M_{a_1|1}^{(1)}& =\frac{\id + (-1)^{a_1} \,Z}{2}  \\
M_{a_2|0}^{(2)}& =\frac{\id + \tfrac{(-1)^{a_2}}{\sqrt{2}} \, (X+Z)}{2}&
M_{a_2|1}^{(2)} & = \frac{\id + \tfrac{(-1)^{a_2}}{\sqrt{2}} \, (-X+Z)}{2}\,.
%\end{aligned}
\end{aligned}
\end{equation}
where $X$ and $Z$ are Pauli operators. Now define the PTP map $\Lambda^{B}[\cdot]$ as 
\begin{align}
\Lambda^{(B)}[\rho] := \tfrac{1}{2} \, \left( \tr{\rho} \, \id - \rho - U\, \rho^T \, U^\dagger \right)\,,
\end{align} where $U =X \otimes Y$ is an antisymmetric unitary. The ability of the extended reduction criterion to detect states that are positive under partial transposition certifies that $\Lambda^{B}[\cdot]$ is non-decomposable \cite{horodecki2,breuer,hall}. 

The claim now is that $\bsig_{A_1\, A_2 | X_1\,X_2}:=\{\sigma_{a_1a_2|x_1x_2}\}_{a_1,a_2,x_1,x_2}$, with
\begin{align*}
\sigma_{a_1a_2|x_1x_2} := \Lambda^{(B)}\left[\sigma^Q_{a_1a_2|x_1x_2}\right],
\end{align*}
for $\sigma^Q_{a_1a_2|x_1x_2} := \Tr{(M_{a_1|x_1}^{(1)}\otimes M_{a_2|x_2}^{(2)}\otimes \id)\ket{\Psi}\bra{\Psi}}{12}$,  is a post-quantum assemblage. This can be certified numerically via a semidefinite program (SDP). In particular, although the set of quantum assemblages has a complicated structure, it is possible to construct approximations to this set, which have a much simpler structure, and contain within them the set of quantum assemblages~\cite{PQP}.  Whether or not an assemblage is inside such an approximation can be checked efficiently using an SDP, and hence if an assemblage is found to be outside the approximation, then it is also certified to be post-quantum. Using this method, we found that $\bsig_{A_1\, A_2 | X_1\,X_2}$  does not belong to the set of quantum assemblages, and therefore demonstrates post-quantum steering. The details of the calculation can be found in Appendix \ref{sdp}.

We emphasise that this is the first analytical example of a post-quantum assemblage that can only produce quantum correlations in a Bell experiment where the characterised party makes measurements. Although we will not discuss the details of this, we have verified in a similar fashion that also the well-known Choi map~\cite{choi1,choi2} can generate post-quantum assemblages.

\section{Post-quantum steering from unextendible product bases}
\label{sec:UPB}

In this section we present a family of certifiable post-quantum assemblages for arbitrary multipartite steering scenarios, which arises naturally from our formalism. We will consider the more general scenario, where instead of a single characterised party, we have $t$ characterised parties, who are steered by $n$ uncharacterised parties performing $m$ measurements of $d$ outcomes. 

We take a local-orthogonality (LO) inequality \cite{LO2} in the $(n+t,m,d)$ Bell scenario. Following \cite{LO2}, one can find an unextendible product basis (UPB) or a weak UPB (for scenarios with nondichotomic measurements) for $\mathcal{H} = \left(\mathbb{C}^d\right)^{\otimes (n+t)}$ from the LO inequality. Such a weak UPB can be constructed as follows \cite{LO2}. In each local Hilbert space $\mathbb{C}^{d}$, we distinguish $m$ different orthogonal bases, denoted by $B_j=\{\ket{\phi_i^{(j)}}\}_{i=0}^{d-1}$, where
$j=0,\ldots,m-1.$\footnote{For simplicity, we take these to be the same for
all sites.} These bases are chosen such that if two basis vectors are orthogonal, then they are from the same basis: $ \langle \phi_i^{(j)} | \phi_{i'}^{(j')}\rangle = 0 \,\Longrightarrow\, j = j'$. Given an optimal LO inequality represented by a set of mutually orthogonal events $\mathcal{S}$, the corresponding UPB consists of the following elements: $\left\{ \ket{\phi_{a_1}^{(x_1)}}\otimes \ldots \otimes 
\ket{\phi_{a_{n+t}}^{(x_{n+t})}} | (a_1 \ldots a_{n+t} |x_1 \ldots x_{n+t}) \in \mathcal{S}\right\}$. 

This UPB then defines a normalised entanglement witness $W = f(\epsilon) \, \left( \Pi_{\mathrm{UPB}} - \epsilon \mathbbm{1} \right)$, where $\epsilon = \min_{\ket{\psi_1} \otimes \ldots \otimes \ket{\psi_{n+t}}} \bra{\psi_1} \otimes \ldots \otimes \bra{\psi_{n+t}} \Pi_{\mathrm{UPB}} \ket{\psi_1} \otimes \ldots \otimes \ket{\psi_{n+t}}$, and 
$f(\epsilon)=(|\mathcal{S}|-d^{n+t} \,\epsilon)^{-1}$. 
Indeed, since $\epsilon \in \left(0,\tfrac{|\mathcal{S}|}{d^{n+t}}\right)$, $\tr{\rho\,W}$ gives nonnegative values when $\rho$ is a fully separable state, and $\tr{\rho_{\sf{be}}\,W}<0$ for the bound entangled state $\rho_{\mathsf{be}} := \tfrac{1}{d^{n+t}-|\mathcal{S}|}\left( \mathbbm{1} - \Pi_{\mathrm{UPB}} \right)$.
The method of Example \ref{ex:wit} can then be applied to this weak UPB to construct an assemblage. This is defined by the uncharacterised parties performing the measurements $M^{(j)}_{a_j|x_j}:=\ket{\phi_{a_j}^{(x_j)}}\bra{\phi_{a_j}^{(x_j)}}$, $j=1\ldots n$, on $W$:
\begin{align*}
\sigma_{a_1 \ldots a_{n} | x_1 \ldots x_{n}} = \Tr{(\ket{\phi_{a_1}^{(x_1)}}\bra{\phi_{a_1}^{(x_1)}}\otimes \ldots \otimes 
\ket{\phi_{a_{n}}^{(x_{n})}}\bra{\phi_{a_{n}}^{(x_{n})}} \otimes \mathbbm{1}^{\otimes t}) \, W}{1\ldots n}\,.
\end{align*}

The post-quantumness of the assemblage is certified by the correlations obtained when the characterised parties measure $M^{(j)}_{a_j|x_j}:=\ket{\phi_{a_{j}}^{(x_{j})}}\bra{\phi_{a_{j}}^{(x_{j})}}$, $j=n+1 \ldots n+t$, that is:
\begin{align*}
p(a_1 \ldots a_{n+t} | x_1 \ldots x_{n+t}) = \tr{(\ket{\phi_{a_1}^{(x_1)}}\bra{\phi_{a_1}^{(x_1)}}\otimes \ldots \otimes 
\ket{\phi_{a_{n+t}}^{(x_{n+t})}}\bra{\phi_{a_{n+t}}^{(x_{n+t})}}) W}\,.
\end{align*}
Indeed, these correlations violate the original LO inequality
\begin{align*}
\sum_{(a_1 \ldots a_{n+t} | x_1 \ldots x_{n+t}) \in \mathcal{S}} p(a_1 \ldots a_{n+t} | x_1 \ldots x_{n+t}) \leq 1\,,
\end{align*}
since 
\begin{align*}
\sum_{(a_1 \ldots a_{n+t} | x_1 \ldots x_{n+t}) \in \mathcal{S}} \tr{\ket{\phi_{a_1}^{(x_1)}}\bra{\phi_{a_1}^{(x_1)}}\otimes \ldots \otimes 
\ket{\phi_{a_{n+t}}^{(x_{n+t})}}\bra{\phi_{a_{n+t}}^{(x_{n+t})}} \, W} = f(\epsilon) \, |\mathcal{S}|\,(1-\epsilon)\,,
\end{align*}
which is larger than unity since $\epsilon \in (0,\tfrac{|\mathcal{S}|}{d^{n+t}})$.

Even though the post-quantum assemblages that arise in this family produce post-quantum correlations, the fact that they admit such an elegant analytical form makes them interesting, as this may be useful for potential applications.

\section{A post-quantum steering quantifier}
\label{sec:quantifier}

A crucial issue in the theory of steering is its quantification -- i.e. a notion of whether one assemblage demonstrates more steering than another in some well-defined sense. A number of quantifiers have recently been explored \cite{SNC14,piani,PQP,GA15,KW16}, arising from differing operational tasks or geometrical constructions.

The formalism presented in Sec.~\ref{se:for} naturally leads to a novel steering quantifier, similar in spirit to that proposed in Ref.~\cite{piani2} for Bell correlations, which we refer to as the \textit{steering negativity}. The steering negativity is specially tailored to quantify the amount post-quantum steering an assemblage demonstrates (as opposed to the amount of steering), as we see next.

By virtue of Theorem \ref{theOteo}, any assemblage can be reproduced by local quantum measurements on a Hermitian operator $\tilde{\rho}$. This operator, which is not unique, can always be decomposed in terms of its negative and positive parts, i.e. $\tilde{\rho} = \rho_+ - \rho_-$, with $\rho_{\pm} \geq 0$. Then, for an arbitrary no-signalling assemblage $\bsig_{A_1\ldots A_{n} | X_1 \ldots X_{n}}$, we define its steering negativity as
\begin{align}
\label{eq:def_neg}
\nu(\bsig_{A_1\ldots A_{n} | X_1 \ldots X_{n}}) := \min_{\big\{M^{(i)}_{x_i}\big\},\,\tilde{\rho}} &\quad \tr{\rho_-} \\
\nonumber
\text{s.t.} &\quad \tilde{\rho} = \rho_+ - \rho_-, \nonumber \\
&\quad \rho_\pm \geq 0, \nonumber \\
&\quad \sigma_{a_1 \ldots a_{n} | x_1 \ldots x_{n}} = \Tr{ \left( M^{(1)}_{a_1|x_1} \otimes \ldots \otimes M^{(n)}_{a_{n}|x_{n}} \otimes \mathbbm{1} \right) \, \tilde{\rho}}{1, \ldots n} \nonumber 
\end{align}
where $M^{(i)}_{x_i}$ stands for a POVM with elements $M^{(i)}_{a_i|x_i}$, and the minimisation runs over all such $M^{(i)}_{x_i}$, for $1\leq i\leq n$, as well as over $\tilde{\rho}$. 
Note that since all quantum assemblages admit a decomposition as in Eq. \eqref{Ogen} with a positive semidefinite $\tilde{\rho}$, their steering negativity by definition, is zero. Hence, in contrast to other measures of steering, this figure of merit is relevant for quantifying the post-quantumness of an assemblage.

Since the operator $\tilde{\rho}$ is normalised, the negativity can equivalently be computed as 
\begin{align}
\nonumber
\nu(\bsig_{A_1\ldots A_{n} | X_1 \ldots X_{n}})  \equiv  \min_{\big\{M^{(i)}_{x_i}\big\},\,\tilde{\rho}}& \quad  \frac{\|\tilde{\rho}\|_1 - 1}{2} \\
\text{s.t.} &\quad \sigma_{a_1 \ldots a_{n} | x_1 \ldots x_{n}} = \Tr{ \left( M^{(1)}_{a_1|x_1} \otimes \ldots \otimes M^{(n)}_{a_{n}|x_{n}} \otimes \mathbbm{1} \right) \, \tilde{\rho}}{1, \ldots n} \nonumber 
\label{eq:neg_alt_exp}
\end{align}
where $\|\cdot\|_1$ denotes the trace norm. This alternative expression for  $\nu$ makes the connection with the well-known negativity \cite{ZHSL98,VW02} from entanglement theory explicit. In fact, if $\tilde{\rho}$  is taken as the partial transpose of a given state $\rho$, then 
$\frac{||\tilde{\rho}||_1 - 1}{2}$ defines precisely the entanglement negativity of $\rho$. 

In the following, we will show that the steering negativity is a convex quantifier of post-quantum steering. We do so by putting forward a proposal for the study post-quantum steering from a resource-theoretic perspective, whereby Alice and Bob are allowed to perform operations which are deemed unable to increase the amount of post-quantum steering they share (so called free operations), similar to what has been done for (quantum) steering \cite{GA15}.

The quantum steering exhibited by a quantum assemblage is postulated in Ref.~\cite{GA15} not to increase under local operations and one-way classical communication (one-way LOCC), where the communication is only allowed from the steered party to the steering parties. On the other hand, a shared quantum state cannot lead to post-quantum steering, by definition. Led by the idea of combining these two properties, we postulate that post-quantum steering should not increase under the processing of an assemblage via the introduction of a further shared entangled state and one-way LOCC. We notice that, thanks to quantum teleportation~\cite{teleportation}, unrestricted shared entanglement assisted by one-way LOCC is equivalent to local operations aided by one-way quantum communication. Thus, much in the same fashion in which quantum steering is postulated not to increase under one-way LOCC, our request amounts to postulating that post-quantum steering does not increase under one-way quantum operations, with the communication going from the steered party to the steering parties. Notice that, since classical communication is a subset of quantum communication, a post-quantum steering quantifier that respects our request is necessarily also a standard steering monotone. Moreover, given that one-way quantum communication allows for the sharing of an arbitrary quantum state, and hence for the creation -- even from scratch -- of an arbitrary quantum assemblage, a post-quantum steering quantifier necessarily assumes a constant value for all quantum assemblages, and such a value can be set to zero. What we exactly mean by processing of an assemblage by one-way quantum operations is shown in Figure~\ref{fig:processing}, where for the sake of simplicity and clarity we depict explicitly only one steering party.

\begin{figure}
	\centering
	\subfigure[$\tilde{\rho}$-operator view of an assemblage.]{%! \usetikzlibrary{decorations.pathreplacing,decorations.pathmorphing}
		\begin{tikzpicture}[scale=1.000000,x=1pt,y=1pt]
		\filldraw[color=white] (0.000000, -7.500000) rectangle (37.000000, 37.500000);
		% Drawing wires
		% Line 1: x   W x a type=cwire
		\draw[color=black] (0.000000,29.500000) -- (37.000000,29.500000);
		\draw[color=black] (0.000000,30.500000) -- (37.000000,30.500000);
		\draw[color=black] (0.000000,30.000000) node[left] {$x$};
		% Line 2: a b W O_{AB}<
		\draw[color=black] (0.000000,15.000000) -- (18.500000,15.000000);
		%   Deferring wire label at (0.000000,15.000000)
		% Line 2: a b W O_{AB}<
		\draw[color=black] (0.000000,0.000000) -- (37.000000,0.000000);
		\filldraw[color=white,fill=white] (0.000000,-3.750000) rectangle (-4.000000,18.750000);
		\draw[decorate,decoration={brace,amplitude = 4.000000pt},very thick] (0.000000,-3.750000) -- (0.000000,18.750000);
		\draw[color=black] (-4.000000,7.500000) node[left] {$\tilde{\rho}_{AB}$};
		% Done with wires; drawing gates
		% Line 3: a x G $M^A_{a|x}$ a:owire width=25
		\draw (18.000000,30.000000) -- (18.000000,15.000000);
		\draw (19.000000,30.000000) -- (19.000000,15.000000);
		\begin{scope}
		\draw[fill=white] (18.500000, 22.500000) +(-45.000000:17.677670pt and 19.091883pt) -- +(45.000000:17.677670pt and 19.091883pt) -- +(135.000000:17.677670pt and 19.091883pt) -- +(225.000000:17.677670pt and 19.091883pt) -- cycle;
		\clip (18.500000, 22.500000) +(-45.000000:17.677670pt and 19.091883pt) -- +(45.000000:17.677670pt and 19.091883pt) -- +(135.000000:17.677670pt and 19.091883pt) -- +(225.000000:17.677670pt and 19.091883pt) -- cycle;
		\draw (18.500000, 22.500000) node {$M^A_{a|x}$};
		\end{scope}
		% Done with gates; drawing ending labels
		\draw[color=black] (37.000000,30.000000) node[right] {$a$};
		\draw[color=black] (37.000000,0.000000) node[right] {$\sigma^B_{a|x}$};
		% Done with ending labels; drawing cut lines and comments
		% Done with comments
		\end{tikzpicture}
	}
	\qquad
	\subfigure[Processing of an assemblage by one-way quantum operations.]{%! \usetikzlibrary{decorations.pathreplacing,decorations.pathmorphing}
		\begin{tikzpicture}[scale=1.000000,x=1pt,y=1pt]
		\filldraw[color=white] (0.000000, -7.500000) rectangle (224.000000, 67.500000);
		% Drawing wires
		% Line 1: ap  W owire
		% Line 2: xp  W x' a' type=cwire
		\draw[color=black] (0.000000,44.500000) -- (62.000000,44.500000);
		\draw[color=black] (0.000000,45.500000) -- (62.000000,45.500000);
		\draw[color=black] (62.000000,44.500000) -- (69.500000,44.500000);
		\draw[color=black] (62.000000,45.500000) -- (69.500000,45.500000);
		\draw[color=black] (69.500000,44.500000) -- (77.000000,44.500000);
		\draw[color=black] (69.500000,45.500000) -- (77.000000,45.500000);
		\draw[color=black] (77.000000,44.500000) -- (106.500000,44.500000);
		\draw[color=black] (77.000000,45.500000) -- (106.500000,45.500000);
		\draw[color=black,thick] (106.500000,44.500000) -- (200.500000,44.500000);
		\draw[color=black,thick] (106.500000,45.500000) -- (200.500000,45.500000);
		\draw[color=black,thin] (200.500000,44.500000) -- (224.000000,44.500000);
		\draw[color=black,thin] (200.500000,45.500000) -- (224.000000,45.500000);
		\draw[color=black] (0.000000,45.000000) node[left] {$x'$};
		% Line 3: a b W O_{AB}< style=thick
		\draw[color=black,thick] (0.000000,30.000000) -- (62.000000,30.000000);
		\draw[color=black,thick] (62.000000,30.000000) -- (69.500000,30.000000);
		\draw[color=black,thick] (69.500000,30.000000) -- (77.000000,30.000000);
		\draw[color=black,thick] (77.000000,30.000000) -- (153.500000,30.000000);
		%   Deferring wire label at (0.000000,30.000000)
		% Line 3: a b W O_{AB}< style=thick
		\draw[color=black,thick] (0.000000,15.000000) -- (28.000000,15.000000);
		\draw[color=black,thin] (28.000000,15.000000) -- (62.000000,15.000000);
		\draw[color=black,thin] (62.000000,15.000000) -- (69.500000,15.000000);
		\draw[color=black,thin] (69.500000,15.000000) -- (77.000000,15.000000);
		\draw[color=black,thin] (77.000000,15.000000) -- (224.000000,15.000000);
		\filldraw[color=white,fill=white] (0.000000,11.250000) rectangle (-4.000000,33.750000);
		\draw[decorate,decoration={brace,amplitude = 4.000000pt},very thick] (0.000000,11.250000) -- (0.000000,33.750000);
		\draw[color=black] (-4.000000,22.500000) node[left] {$\tilde{\rho}_{AB}$};
		% Line 4: app W owire
		\draw[color=black,rounded corners=4.000000pt] (28.000000,0.000000) -- (62.000000,0.000000) -- (69.500000,30.000000);
		\draw[color=black,rounded corners=4.000000pt] (69.500000,30.000000) -- (77.000000,60.000000) -- (200.500000,60.000000);
		% Done with wires; drawing gates
		% Line 5: b app G $\Lambda_{B\rightarrow B'A'}$ app:qwire width=44 b:style=thin
		\draw (28.000000,15.000000) -- (28.000000,0.000000);
		\begin{scope}
		\draw[fill=white] (28.000000, 7.500000) +(-45.000000:31.112698pt and 19.091883pt) -- +(45.000000:31.112698pt and 19.091883pt) -- +(135.000000:31.112698pt and 19.091883pt) -- +(225.000000:31.112698pt and 19.091883pt) -- cycle;
		\clip (28.000000, 7.500000) +(-45.000000:31.112698pt and 19.091883pt) -- +(45.000000:31.112698pt and 19.091883pt) -- +(135.000000:31.112698pt and 19.091883pt) -- +(225.000000:31.112698pt and 19.091883pt) -- cycle;
		\draw (28.000000, 7.500000) node {$\Lambda_{B\rightarrow B'A'}$};
		\end{scope}
		% Line 6: app ap PERMUTE
		% Line 7: xp app G $\mathcal{E}^{A'}_{x|x'}$ width=35 xp:style=thick
		\draw (106.000000,60.000000) -- (106.000000,45.000000);
		\draw (107.000000,60.000000) -- (107.000000,45.000000);
		\begin{scope}
		\draw[fill=white] (106.500000, 52.500000) +(-45.000000:24.748737pt and 19.091883pt) -- +(45.000000:24.748737pt and 19.091883pt) -- +(135.000000:24.748737pt and 19.091883pt) -- +(225.000000:24.748737pt and 19.091883pt) -- cycle;
		\clip (106.500000, 52.500000) +(-45.000000:24.748737pt and 19.091883pt) -- +(45.000000:24.748737pt and 19.091883pt) -- +(135.000000:24.748737pt and 19.091883pt) -- +(225.000000:24.748737pt and 19.091883pt) -- cycle;
		\draw (106.500000, 52.500000) node {$\mathcal{E}^{A'}_{x|x'}$};
		\end{scope}
		% Line 8: xp a G $M^A_{a|x}$ a:owire style=thick width=35
		\draw[thick] (153.000000,45.000000) -- (153.000000,30.000000);
		\draw[thick] (154.000000,45.000000) -- (154.000000,30.000000);
		\begin{scope}[thick]
		\begin{scope}
		\draw[fill=white] (153.500000, 37.500000) +(-45.000000:24.748737pt and 19.091883pt) -- +(45.000000:24.748737pt and 19.091883pt) -- +(135.000000:24.748737pt and 19.091883pt) -- +(225.000000:24.748737pt and 19.091883pt) -- cycle;
		\clip (153.500000, 37.500000) +(-45.000000:24.748737pt and 19.091883pt) -- +(45.000000:24.748737pt and 19.091883pt) -- +(135.000000:24.748737pt and 19.091883pt) -- +(225.000000:24.748737pt and 19.091883pt) -- cycle;
		\draw (153.500000, 37.500000) node {$M^A_{a|x}$};
		\end{scope}
		\end{scope}
		% Line 9: xp app G $N^{A'}_{a'|a}$ app:owire width=35 xp:style=thin
		\draw (200.000000,60.000000) -- (200.000000,45.000000);
		\draw (201.000000,60.000000) -- (201.000000,45.000000);
		\begin{scope}
		\draw[fill=white] (200.500000, 52.500000) +(-45.000000:24.748737pt and 19.091883pt) -- +(45.000000:24.748737pt and 19.091883pt) -- +(135.000000:24.748737pt and 19.091883pt) -- +(225.000000:24.748737pt and 19.091883pt) -- cycle;
		\clip (200.500000, 52.500000) +(-45.000000:24.748737pt and 19.091883pt) -- +(45.000000:24.748737pt and 19.091883pt) -- +(135.000000:24.748737pt and 19.091883pt) -- +(225.000000:24.748737pt and 19.091883pt) -- cycle;
		\draw (200.500000, 52.500000) node {$N^{A'}_{a'|a}$};
		\end{scope}
		% Done with gates; drawing ending labels
		\draw[color=black] (224.000000,45.000000) node[right] {$a'$};
		\draw[color=black] (224.000000,15.000000) node[right] {$\sigma^{B'}_{a'|x'}$};
		% Done with ending labels; drawing cut lines and comments
		% Line 11: xp app a @ 2 4 fill=blue color=blue % $M'^{AA'}_{a'|x'}$
		\draw[draw opacity=1.000000,fill opacity=0.200000,color=blue,fill=blue] (86.000000,67.500000) rectangle (221.000000,22.500000);
		\draw (153.500000, 67.500000) node[text width=144pt,above,text centered,color=blue] {$M'^{AA'}_{a'|x'}$};
		\draw[draw opacity=1.000000,fill opacity=0.200000,color=blue,fill=blue] (86.000000,67.500000) rectangle (221.000000,22.500000);
		% Line 12: a b app @ 0 1 fill=red color=red %%  $O'_{AA'B'}$
		\draw[draw opacity=1.000000,fill opacity=0.200000,color=red,fill=red] (3.000000,37.500000) rectangle (80.000000,-7.500000);
		\draw (41.500000, -7.500000) node[text width=144pt,below,text centered,color=red] {$\tilde{\rho}'_{AA'B'}$};
		\draw[draw opacity=1.000000,fill opacity=0.200000,color=red,fill=red] (3.000000,37.500000) rectangle (80.000000,-7.500000);
		% Done with comments
		\end{tikzpicture}
	}
	\caption{Circuit representation of an assemblage, 
		and processing of an assemblage by means of one-way quantum operations. Time goes from left to right. For the sake of clarity we focus on the case of just one steering party. (a) A no-signalling assemblage $\{\sigma^B_{a | x}\}_{a,x}$ is seen as the result of local measurements, described by a set of POVMs $\{M_{a|x}\}_{a,x}$, performed by a steering party on part of a distributed system, which is initially in a (pseudo-)state $\tilde{\rho}_{AB}$ (see Theorem \ref{Ogen}). (b) The original assemblage (whose $\tilde{\rho}$-representation is depicted in bold in the diagram) can be processed by one-way quantum operations. The steered party applies a channel $\Lambda_{B\rightarrow B'A'}$ on their share of the system, and sends system $A'$ to the steering party. Based on a classical input $x'$, the steering party applies an instrument $\{\mathcal{E}_{x|x'}\}_x$ to the received system $A'$. The classical output of the instrument, $x$, is used as input for the original steering scenario, that is, as choice of original measurement on the $A$ part of the originally shared (pseudo-)state $\tilde{\rho}_{AB}$, while the quantum output of the instrument is kept in a quantum memory for further processing (notice that such quantum output may include information about both $x$ and $x'$, without loss of generality). The output $a$ of the original measurement is used to decide which final POVM $\{N_{a'|a}\}_a$ to implement on the local quantum memory, with final classical output $a'$. The processing can be described in terms of a new (pseudo-)state $\tilde{\rho}'_{AA'B}=\Lambda_{B\rightarrow B'A'}[\tilde{\rho}_{AB}]$ (highlighted in red online) and a new set of POVMs of the steering party, $\{M'^{AA'}_{a'|x'}\}_{a',x'}$ (highlighted in blue online). The end result is a new assemblage $\{\sigma^{B'}_{a' | x'}\}_{a',x'}$. We remark that the processing and the final assemblage are independent from the specific representation of the original assemblage, and depend only on the original assemblage, as well as on the choices of channel $\Lambda_{B\rightarrow B'A'}$, instruments $\{\mathcal{E}_{x|x'}\}_{x,x'}$, and POVMs $\{N_{a'|a}\}_{a,a'}$.}
	\label{fig:processing}
\end{figure}
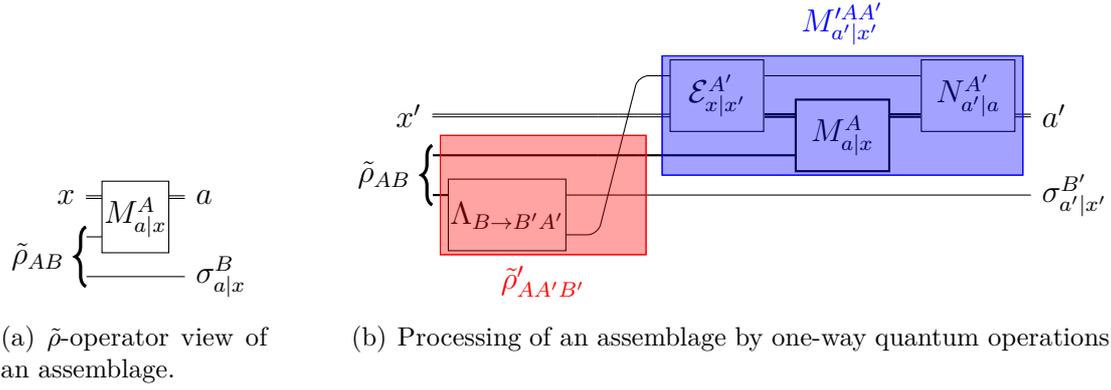

As with many quantum resource theories, it is also convenient and reasonable -- although not strictly necessary~\cite{plenionegativity} -- to ask that a post-quantum steering quantifier is convex. 

We will see below that the steering negativity is a valid convex post-quantum steering quantifier, in the sense that it respects the requests delineated above.

\begin{thm}[Convexity of $\nu$]
\label{the:convex_mon}
The steering negativity is a convex steering quantifier. That is, it is non-increasing under 
arbitrary convex mixings, 
\begin{equation}
\label{eq:conv_nu}
\nu\left(q\,\bsig+(1-q)\bsig'\right)\leq q\,\nu(\bsig)+(1-q)\,\nu(\bsig'), \text{ for all } \bsig \text{ and } \bsig', \text{ and all } 0\leq q\leq 1.
\end{equation}
\end{thm}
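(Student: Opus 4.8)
The plan is to exploit the fact that $\nu$ is defined as a minimization, so it suffices to exhibit, for the mixed assemblage $q\,\bsig+(1-q)\bsig'$, a single feasible point of the optimization in Eq.~\eqref{eq:def_neg} whose objective value equals $q\,\nu(\bsig)+(1-q)\,\nu(\bsig')$. To this end I would start from optimal feasible points for $\bsig$ and $\bsig'$ separately: a Hermitian unit-trace operator $\tilde\rho$ with decomposition $\tilde\rho=\rho_+-\rho_-$ and POVMs $\{M^{(i)}_{a_i|x_i}\}$ achieving $\tr{\rho_-}=\nu(\bsig)$, and analogously $\tilde\rho'=\rho'_+-\rho'_-$ with POVMs $\{M'^{(i)}_{a_i|x_i}\}$ achieving $\tr{\rho'_-}=\nu(\bsig')$. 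The subtlety is that these two representations generically live on different Hilbert spaces for the uncharacterised parties (the construction of Theorem~\ref{theOteo} attaches auxiliary spaces whose dimension depends on the assemblage), so they cannot be combined directly.

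The key step is a direct-sum (block) embedding. For each uncharacterised party $i$, set $\mathcal H_i:=\mathcal H_i^{(1)}\oplus\mathcal H_i^{(2)}$, where $\mathcal H_i^{(1)}$ and $\mathcal H_i^{(2)}$ are the spaces on which $M^{(i)}_{a_i|x_i}$ and $M'^{(i)}_{a_i|x_i}$ act, and define the block-diagonal operators $\bar M^{(i)}_{a_i|x_i}:=M^{(i)}_{a_i|x_i}\oplus M'^{(i)}_{a_i|x_i}$; these are valid POVM elements on $\mathcal H_i$ since each block separately sums to the identity of that block. On $\mathcal H_1\otimes\cdots\otimes\mathcal H_n\otimes\mathcal H_B$, regard $\tilde\rho$ as supported on the ``all-$(1)$'' block $\mathcal H_1^{(1)}\otimes\cdots\otimes\mathcal H_n^{(1)}\otimes\mathcal H_B$ and $\tilde\rho'$ as supported on the ``all-$(2)$'' block, and set $\bar\rho:=q\,\tilde\rho\oplus(1-q)\,\tilde\rho'$ (all remaining blocks zero). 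This $\bar\rho$ is Hermitian with unit trace.

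Next I would verify feasibility. Because each $\bar M^{(i)}_{a_i|x_i}$ is block-diagonal with respect to the $(1)\oplus(2)$ splitting of party $i$, acting with $\bar M^{(1)}_{a_1|x_1}\otimes\cdots\otimes\bar M^{(n)}_{a_n|x_n}\otimes\id$ on an operator supported purely on the all-$(1)$ block yields again something supported on that block, and tracing out the uncharacterised systems reproduces exactly $\Tr{(M^{(1)}_{a_1|x_1}\otimes\cdots\otimes M^{(n)}_{a_n|x_n}\otimes\id)\tilde\rho}{1,\dots,n}=\sigma_{a_1\ldots a_n|x_1\ldots x_n}$, and similarly for the all-$(2)$ block and $\bsig'$. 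Since all ``mixed'' cross-blocks are absent from $\bar\rho$, no cross terms arise, so $\Tr{(\bar M^{(1)}_{a_1|x_1}\otimes\cdots\otimes\bar M^{(n)}_{a_n|x_n}\otimes\id)\bar\rho}{1,\dots,n}=q\,\sigma_{a_1\ldots a_n|x_1\ldots x_n}+(1-q)\,\sigma'_{a_1\ldots a_n|x_1\ldots x_n}$, i.e.\ $\bar\rho$ together with the $\bar M^{(i)}_{x_i}$ reproduces the mixed assemblage. Finally $\bar\rho=\bar\rho_+-\bar\rho_-$ with $\bar\rho_\pm:=q\,\rho_\pm\oplus(1-q)\,\rho'_\pm\ge0$, a valid feasible point of Eq.~\eqref{eq:def_neg} with objective value $\tr{\bar\rho_-}=q\,\tr{\rho_-}+(1-q)\,\tr{\rho'_-}=q\,\nu(\bsig)+(1-q)\,\nu(\bsig')$. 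As $\nu\!\left(q\,\bsig+(1-q)\bsig'\right)$ is the infimum over all feasible points, Eq.~\eqref{eq:conv_nu} follows.

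I expect the main (and essentially only) obstacle to be the bookkeeping of this direct-sum construction: checking that the augmented POVMs $\bar M^{(i)}_{x_i}$ remain normalized on $\mathcal H_i$, and that the off-block cross terms genuinely vanish so that the reproduced assemblage is exactly $q\,\bsig+(1-q)\bsig'$. Once this embedding is in place, convexity is immediate, being just the statement that a minimum of a linear objective over a set closed under this ``block-weighted combination'' is subadditive under such combinations.
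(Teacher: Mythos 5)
Your proposal is correct and follows essentially the same route as the paper: the paper also combines the two optimal representations block-diagonally, by appending to each uncharacterised party a flag ancilla $A_i^*$ (in state $\ket{0}$ or $\ket{1}$) and using flag-controlled POVMs, which is just your direct-sum embedding written as a tensor product with a qubit. Your version is in fact slightly cleaner on two points the paper glosses over: it handles the case where the two optimal $\tilde\rho$'s live on different local Hilbert spaces, and it places the weights $q,1-q$ only in the state (the paper's displayed POVM elements carry spurious $q$ factors that would break normalisation).
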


\begin{proof}
Let $\tilde{\rho}=\rho_+-\rho_-$ and $\tilde{\rho}'=\rho_+'-\rho_-'$ be optimal Hermitian operators  attaining the minima in Eq. \eqref{eq:def_neg} for the assemblages $\bsig_{A_1\ldots A_{n} | X_1 \ldots X_{n}}$ and $\bsig_{A_1\ldots A_{n} | X_1 \ldots X_{n}}'$, respectively, for two suitable sets of POVMs $\big\{M^{(i)}_{x_i}\big\}:=\big\{M^{(1)}_{a_1|x_1}, \hdots, M^{(n)}_{a_n|x_n}\big\}_{a_1,x_1,\hdots a_n,x_n}$ and $\big\{M'^{(i)}_{x_i}\big\}:=\big\{M'^{(1)}_{a_1|x_1}, \hdots, M'^{(n)}_{a_n|x_n}\big\}_{a_1,x_1,\hdots a_n,x_n}$. This implies that $\nu(\bsig_{A_1\ldots A_{n} | X_1 \ldots X_{n}})=\tr{\rho_-}$ and $\nu(\bsig_{A_1\ldots A_{n} | X_1 \ldots X_{n}}')=\tr{\rho_-'}$. Now, consider the state 
\begin{align}
\label{eq:conv_comb_trick}
\nonumber
\tilde{\rho}_{A^*_1,A_1,\hdots A^*_n,A_n,B}&:=q\, \ket{0}\bra{0}_{A^*_1}\otimes\hdots \ket{0}\bra{0}_{A^*_n}\otimes\tilde{\rho}+(1-q)\,\ket{1}\bra{1}_{A^*_1}\otimes\hdots \ket{1}\bra{1}_{A^*_n}\otimes\tilde{\rho}'\\
\nonumber
&=\left(q\, \ket{0}\bra{0}_{A_1^*}\otimes\hdots \ket{0}\bra{0}_{A_n^*}\otimes \rho_++(1-q)\,\ket{1}\bra{1}_{A_1^*}\otimes\hdots \ket{1}\bra{1}_{A_n^*}\otimes \rho_+'\right)\\
&-\left(q\, \ket{0}\bra{0}_{A_1^*}\otimes\hdots \ket{0}\bra{0}_{A_n^*}\otimes \rho_-+(1-q)\,\ket{1}\bra{1}_{A_1^*}\otimes\hdots \ket{1}\bra{1}_{A_n^*}\otimes \rho_-'\right),
\end{align}
where a local ancillary qubit  $A^*_i$, in state either $\ket{0}_{A^*_i}$ or $\ket{1}_{A^*_i}$, has been given to each Alice, with $1\leq i\leq n$. This state realises a decomposition of the form Eq. \eqref{Ogen} for $q\,\bsig_{A_1\ldots A_{n} | X_1 \ldots X_{n}}+(1-q)\,\bsig_{A_1\ldots A_{n} | X_1 \ldots X_{n}}'$, where a suitable set of POVMs can be taken to be $\big\{q\, \ket{0}\bra{0}_{A_1^*}\otimes M^{(1)}_{a_1|x_1}+(1-q)\,\ket{1}\bra{1}_{A_1^*}\otimes M'^{(1)}_{a_1|x_1},\, \hdots\, q\, \ket{0}\bra{0}_{A_n^*}\otimes M^{(n)}_{a_n|x_n}+(1-q)\,\ket{1}\bra{1}_{A_n^*}\otimes M'^{(n)}_{a_n|x_n}\big\}_{a_1,x_1,\hdots a_n,x_n}$. Therefore, even though such a decomposition is not guaranteed to be optimal, it is nevertheless the case that
\begin{align}
\nonumber
\nu\left(q\,\bsig_{A_1\ldots A_{n} | X_1 \ldots X_{n}}+(1-q)\bsig_{A_1\ldots A_{n} | X_1 \ldots X_{n}}'\right)&\leq{\rm tr}\big\{q\, \ket{0}\bra{0}_{A_1'}\otimes\hdots \ket{0}\bra{0}_{A_n'}\otimes \rho_-\\
\nonumber
&+(1-q)\,\ket{1}\bra{1}_{A_1'}\otimes\hdots \ket{1}\bra{1}_{A_n'}\otimes \rho_-'\big\}\\
&=q\,\tr{\rho_-}+(1-q)\,\tr{\rho_-'}.
\label{eq:last_conv}
\end{align}
Note that the last term  equals the right hand of Eq. \eqref{eq:conv_nu}, which proves the theorem's statement. 
\end{proof}

\begin{thm}[Monotonicity of $\nu$]
The steering negativity $\nu$ is a post-quantum steering monotone under processing by one-way quantum operations. 
\end{thm}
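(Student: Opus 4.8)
The plan is to show that the one-way processing of Fig.~\ref{fig:processing}(b) acts on the representation of an assemblage provided by Theorem~\ref{theOteo} in a way that can only decrease the trace of the negative part of $\tilde\rho$. Given an input assemblage $\bsig$ represented by a Hermitian operator $\tilde\rho_{AB}$ and POVMs $\{M_{a|x}\}$, I would show that the output assemblage $\bsig'$ is represented by $\tilde\rho'_{AA'B'}:=\Lambda_{B\to B'A'}[\tilde\rho_{AB}]$ together with a suitable set of POVMs on the enlarged steering-side system $AA'$, and then invoke the fact that a completely positive trace-preserving map sends a difference of positive operators to a difference of positive operators with the same trace on the negative side. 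As in the figure, I would carry this out for a single steering party; the general $n$-party case is verbatim, with $\Lambda_{B\to B'A'_1\dots A'_n}$ routing a subsystem $A'_i$ to each Alice, who then runs her own instrument and final POVM.

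First I would unpack the circuit. The steered party's channel yields $\tilde\rho'_{AA'B'}:=(\mathcal I_A\otimes\Lambda_{B\to B'A'})[\tilde\rho_{AB}]$. On the steering side, feeding the classical output $x$ of the instrument $\{\mathcal E_{x|x'}\}_x$ (which sends the received subsystem $A'$ into a quantum memory) into the original measurement choice, and the original outcome $a$ into the choice of final POVM $\{N_{a'|a}\}$, amounts in the Heisenberg picture to measuring $AA'$ with the POVMs
\begin{equation*}
M'_{a'|x'} \;=\; \sum_{x,a}\; M_{a|x}^{A}\otimes\mathcal E_{x|x'}^{\dagger}\!\big[N_{a'|a}\big].
\end{equation*}
Using $\sum_a M_{a|x}=\id_A$ and that $\sum_x\mathcal E_{x|x'}$ is trace preserving (so its dual is unital), one checks $\sum_{a'}M'_{a'|x'}=\id_{AA'}$, so these are legitimate POVMs; a direct substitution then gives $\sigma^{B'}_{a'|x'}=\Tr{(M'_{a'|x'}\otimes\id_{B'})\,\tilde\rho'_{AA'B'}}{AA'}$, i.e.\ $(\tilde\rho',\{M'\})$ is a valid representation of $\bsig'$ in the sense of Theorem~\ref{theOteo}.

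Then comes the core step. I would take $\tilde\rho_{AB}=\rho_+-\rho_-$ to be an \emph{optimal} decomposition for $\bsig$, so $\nu(\bsig)=\tr{\rho_-}$. Since $\Lambda_{B\to B'A'}$ is completely positive, $\rho'_\pm:=\Lambda_{B\to B'A'}[\rho_\pm]\ge 0$, and since it is trace preserving, $\tr{\rho'_-}=\tr{\rho_-}$; moreover $\tilde\rho'=\rho'_+-\rho'_-$. Hence $(\tilde\rho',\{M'\})$ together with the splitting $\rho'_\pm$ is a feasible -- though generally not optimal -- point of the minimisation \eqref{eq:def_neg} defining $\nu(\bsig')$, so that
\begin{equation*}
\nu(\bsig')\;\le\;\tr{\rho'_-}\;=\;\tr{\rho_-}\;=\;\nu(\bsig).
\end{equation*}
Any additional classical post-processing of the outcomes (relabelling, coarse-graining, discarding the memory) only further modifies the POVMs and leaves $\tilde\rho'$ untouched, hence cannot increase $\nu$; together with convexity (Theorem~\ref{the:convex_mon}) this gives monotonicity under the full class of free operations we postulated.

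I expect the main obstacle to be bookkeeping rather than genuine mathematics: carefully verifying that the operational wiring of Fig.~\ref{fig:processing}(b) -- the feed-forward of $x$ into $M_{a|x}$, of $a$ into $N_{a'|a}$, and the persistence of the quantum memory -- is faithfully captured by the single effective POVM $\{M'_{a'|x'}\}$ on $AA'$, that this is well defined regardless of the (non-unique) choice of $\tilde\rho$, and that no-signalling is respected at every stage. Once that translation is pinned down, the monotonicity statement reduces to the one-line observation that a CPTP channel preserves the trace of the negative part of the positive/negative splitting it induces, so the minimal achievable $\tr{\rho_-}$ over representations of the processed assemblage can only decrease.
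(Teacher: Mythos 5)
Your proof is correct and follows essentially the same route as the paper: represent the processed assemblage via $\tilde\rho'=\Lambda_{B\to B'A'_1\dots A'_n}[\tilde\rho]$ for the optimal $\tilde\rho$ together with effective POVMs on the enlarged steering side, and observe that this feasible point cannot increase the negativity. The only (cosmetic) difference is that you push the optimal splitting $\rho_\pm$ through the CPTP map to get $\tr{\Lambda[\rho_-]}=\tr{\rho_-}$ directly, whereas the paper invokes contractivity of the trace norm via Eq.~\eqref{eq:neg_alt_exp}; your version also spells out the effective POVM construction that the paper leaves to Fig.~\ref{fig:processing}.
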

\begin{proof}
	Let the pseudo-state $\tilde{\rho}_{A_1A_2\ldots A_n B}$ be optimal for the sake of computing the steering negativity of a given steering assemblage. Figure~\ref{fig:processing}.(a) shows how processing such assemblage by one-way quantum operations from the steered party to the steering parties leads to a new assemblage that may be thought as originating from a shared (pseudo-)state
	\[
		\tilde{\rho}'_{A_1A'_1A_2A'_2\ldots A_n A'_n B'} = \Lambda_{B\rightarrow A'_1A'_2\ldots A'_n B'}[\tilde{\rho}_{A_1A_2\ldots A_n B}] ,
	\]
	where $\Lambda_{B\rightarrow A'_1A'_2\ldots A'_n B'}$ is a completely-positive trace-preserving map. While such an $\tilde{\rho}'$ may not be optimal for the sake of the steering negativity of the new assemblage, since the trace norm does not increase under the partial action of a completely positive and trace-preserving map, this is enough to prove that the steering negativity does not increase under processing by one-way quantum operations.
\end{proof}

\section{Discussion}

The scope of the steering phenomenon has been widely studied with respect to its applications, for instance to engineer one-sided device independent information theoretical protocols robust to loopholes \cite{sqkd,sr1,sr2,smi1, smi2, smi3,sst1, sst2}. However, questions about its implication for our fundamental understanding of Nature have been much less addressed. In this work we developed a framework that allows us to understand steering in more general set-ups and potentially in theories beyond quantum mechanics. Our formalism starts from the usual formulation of a quantum steering experiment, where the uncharacterised parties perform measurements on their share of a system. By relaxing the properties of the mathematical object $\tilde{\rho}$ that represents the state of the system, one can simulate steering experiments beyond what quantum mechanics allows, while still complying with physical assumptions such as no-signalling.  This framework provides a way to understand classical, quantum and post-quantum steering in a unified manner, each of which can be recovered as special cases of the formalism. In particular, our approach comes equipped with an inherent functional that quantifies the post-quantumness of an assemblage, the negativity of post-quantum steering. We postulate that post-quantum steering should not increase under one-way quantum operations from the steered parties to the steering parties, whereas standard quantum steering is postulated not to increase under one-way LOCC~\cite{GA15}. We prove that the negativity of post-quantum steering is respects such a postulate, and more precisely that it is a convex post-quantum-steering monotone. 

By exploring the connections between entanglement witnesses and positive but not completely positive maps, our framework especially succeeds in representing post-quantum assemblages that may only generate quantum correlations. Using this method, we were able to generate the first analytical examples of post-quantum assemblages which cannot exhibit post-quantum Bell non-locality. An open question is whether every non-decomposable positive map can produce post-quantum assemblages given a suitable initial quantum steering experiment (i.e. local measurements on a quantum state). Along these lines lies the question of what type of entanglement properties should the state $\rho$ of the system shared by all the parties have such that, when the steered one applies a non-decomposable positive map to their quantum system, the generated assemblage is post-quantum. More broadly, our formalism also allows for the definition of Gleason assemblages, which generalise quantum ones. We  provided a family of entanglement witnesses and measurements, constructed from unextendible product bases and local orthogonality inequalities, such that the Gleason assemblages they generate are provable post-quantum. 

Although, post-quantum non-locality and post-quantum steering are fundamentally distinct concepts, there are still many opportunities to explore their relationship. For example, if we take a post-quantum assemblage that can never exhibit post-quantum non-locality, is it possible to take multiple copies of this assemblage and apply some filtering process to reveal post-quantum non-locality? We dub this concept \textit{hidden post-quantum non-locality}, and it remains open whether this can occur and, furthermore, whether it might be the case that in fact all post-quantum assemblages exhibit it.

It would also be fascinating to try and find tasks for which post-quantum steering gives a clear advantage over standard quantum steering. One candidate task is entanglement-assisted sub-channel discrimination with one-way measurements \cite{piani}, where it is known that it is steering, rather than simple entanglement~\cite{pianiwatrous}, that gives an advantage. Post-quantum steering might also help trivialise certain communication tasks (cf. Ref. \cite{Belen16}). We leave it for future work whether post-quantum steering is more useful for any of these tasks, and whether the formalism introduced here might facilitate the study of this question. 

In conclusion, these analytical formulations of post-quantum assemblages provide a starting point from where to explore the possible physical or information-theoretical consequences that the phenomenon could have. We believe that such an approach may shed light on the problem of characterising quantum steering from basic physical principles and of understanding the possibilities and limitations of the steering phenomenon in Nature. 

\section*{Acknowledgments}

We thank Nicolas Brunner and Rodrigo Gallego for fruitful discussions. 
ABS and PS acknowledge financial support from ERC AdG NLST. MJH acknowledges funding from the EPSRC through the Networked Quantum Information Technologies (NQIT) Hub.
LA acknowledges the Brazilian ministries MEC and MCTIC and agencies CNPq, CAPES, FAPERJ, and INCT-IQ, for financial support, and the International Institute of Physics (IIP) at Natal for the hospitality and financial support. PS acknowledges support from the Royal Society, through the University Research Fellowship UHQT. MP acknowledges support from European Union's Horizon 2020 Research and Innovation Programme under the Marie Sk{\l}odowska-Curie Action OPERACQC (Grant Agreement No. 661338), and from the Foundational Questions Institute under the Physics of the Observer Programme (Grant No. FQXi-RFP-1601). This research was supported in part by Perimeter Institute for Theoretical Physics. Research at Perimeter Institute is supported by the Government of Canada through the Department of Innovation, Science and Economic Development Canada and by the Province of Ontario through the Ministry of Research, Innovation and Science.

\appendix

\section{Certification of a post-quantum assemblage} \label{sdp}

In Sec. \ref{nondecomp} an assemblage is generated from a non-decomposable positive but not completely positive map being applied to the characterised part of a quantum assemblage. For completeness we reproduce the recipe for this assemblage. Each Alice can choose among two dichotomic measurements. The characterised party, Bob, describes the marginal state of his system by a Hilbert space of dimension 4. The quantum assemblage under consideration arises as follows: the shared state the uncharacterised parties measure on is given by
\begin{align}
\ket{\Psi} = \frac{\ket{\Psi_\mathsf{1}}+i\,\ket{\Psi_\mathsf{2}}-\ket{\Psi_\mathsf{3}}}{\sqrt{14}}\,,
\end{align}
where
\begin{align*}
\ket{\Psi_\mathsf{N}} = \sum_{\substack{a_1,a_2,b,b' \in \{0,1\}, \\ a_1+a_2+b+b' == N}} \ket{a_1\, a_2 \, b \, b'}\,.
\end{align*}
Here $\ket{a_1\, a_2 \, b \, b'}$ is understood as $\ket{a_1\, a_2 \, b \, b'} = \ket{a_1}_{A_1} \otimes \ket{a_2}_{A_2} \otimes \ket{bb'}_B$. 

The projective measurements the uncharacterised parties perform on their qubits are: 
\begin{align*}
M_{a_1|0}^{(1)}& =\frac{\id  + (-1)^{a_1} \, X}{2}&
M_{a_1|1}^{(1)}& =\frac{\id + (-1)^{a_1} \,Z}{2}  \\
M_{a_2|0}^{(2)}& =\frac{\id + \tfrac{(-1)^{a_2}}{\sqrt{2}} \, (X+Z)}{2}&
M_{a_2|1}^{(2)} & = \frac{\id + \tfrac{(-1)^{a_2}}{\sqrt{2}} \, (-X+Z)}{2}\,.
\end{align*}
where $X$ and $Z$ are the Pauli matrices. The quantum assemblage then arises via
\begin{align}
\sigma^Q_{a_1a_2|x_1x_2}=\Tr{M_{a_1|x_1}^{(1)} \otimes M_{a_2|x_2}^{(2)} \otimes \id_4 \, \ket{\Psi}\bra{\Psi}}{A_1A_2}\,.
\end{align}
Now define the positive-trace-preserving map $\Lambda^{(B)} \,:\, \mathcal{H}_B \rightarrow \mathcal{H}_B$ as 
\begin{align}
\Lambda^{(B)}[\rho] := \tfrac{1}{2} \, \left( \tr{\rho} \, \id_4 - \rho - U\, \rho^T \, U^\dagger \right)\,,
\end{align}
where $U = X \otimes Y$.

We now need to check whether the assemblage is almost-quantum. Checking whether an assemblage is almost-quantum is an SDP. As a bi-product of the computation, we obtain a steering inequality that also certifies the post-quantumness of $\{\sigma_{a_1a_2|x_1x_2}\}$. The inequality has the form
\begin{align}\label{eq:steeringineq}
\Tr{F_R \, \tilde{\rho}_B + \sum_{x_1} F^{(1)}_{x_1} \, \tilde{\sigma}^{A_1}_{0|x_1} + \sum_{x_2} F^{(2)}_{x_2} \, \tilde{\sigma}^{A_2}_{0|x_2} + \sum_{x_1,x_2} F_{x_1x_2} \, \tilde{\sigma}_{00|x_1x_2}}{} \leq -0.0258\,,
\end{align}
where $\tilde{\sigma}^{A_1}_{0|x_1}  =  \sum_{a_2} \tilde{\sigma}_{0a_2|x_1x_2}$, $\tilde{\sigma}^{A_2}_{0|x_2}  =  \sum_{a_1} \tilde{\sigma}_{a_10|x_1x_2}$ and $\tilde{\rho}_B = \sum_{a_1a_2} \tilde{\sigma}_{a_1a_2|x_1x_2}$ are well-defined marginal assemblages whenever $\{\tilde{\sigma}_{a_1a_2|x_1x_2}\}$ is no-signalling. 
Table \ref{table:ineq} presents the explicit form of the operators $\{F_R, F^{(1)}_{x_1}, F^{(2)}_{x_2}, F_{x_1x_2}\}$.

\begin{table}
\begin{center} $ Fr = \left( \begin{array}{cccc}-2.65 &-0.495+1.74i &0.477+2.54i &0 \\ 
-0.495-1.74i &-4.33 &0 &0.477+2.54i \\ 
0.477-2.54i &0 &-4.33 &0.495-1.74i \\ 
0 &0.477-2.54i &0.495+1.74i &-2.65 
\end{array} \right)$
\\
%\begin{tabular}{c} 
$F^{(1)}_{0} = \left( \begin{array}{cccc}-1.17 &1.19+0.488i &-1.23+1.35i &0 \\ 
1.19-0.488i &-2.13 &0 &-1.23+1.35i \\ 
-1.23-1.35i &0 &-2.13 &-1.19-0.488i \\ 
0 &-1.23-1.35i &-1.19+0.488i &-1.17
\end{array} \right)$ 
 \\ 
$F^{(2)}_{0} = \left( \begin{array}{cccc}-1.16 &1.24-1.09i &-0.761+2.09i &0 \\ 
1.24+1.09i &-2.96 &0 &-0.761+2.09i \\ 
-0.761-2.09i &0 &-2.96 &-1.24+1.09i \\ 
0 &-0.761-2.09i &-1.24-1.09i &-1.16
\end{array} \right)$ 
 \\$F^{(1)}_{1} = \left( \begin{array}{cccc}-1.51 &0.134-1.58i &-0.0469+1.96i &0 \\ 
0.134+1.58i &-1.01 &0 &-0.0469+1.96i \\ 
-0.0469-1.96i &0 &-1.01 &-0.134+1.58i \\ 
0 &-0.0469-1.96i &-0.134-1.58i &-1.51
\end{array} \right)$ 
 \\ 
$F^{(2)}_{1} = \left( \begin{array}{cccc}-1.4 &-0.0294-0.939i &1.07+1.09i &0 \\ 
-0.0294+0.939i &-0.0605 &0 &1.07+1.09i \\ 
1.07-1.09i &0 &-0.0605 &0.0294+0.939i \\ 
0 &1.07-1.09i &0.0294-0.939i &-1.4
\end{array} \right)$%\end{tabular} 
 \\
%\begin{tabular}{c} 
$F_{0 0} = \left( \begin{array}{cccc}-1.11 &-1.58-0.678i &-0.746-0.752i &0  \\ 
-1.58+0.678i &3.52 &0 &-0.746-0.752i  \\ 
-0.746+0.752i &0 &3.52 &1.58+0.678i  \\ 
0 &-0.746+0.752i &1.58-0.678i &-1.11   
\end{array} \right)$ 
 \\ 
$F_{0 1} = \left( \begin{array}{cccc}0.265 &-1.44-0.441i &-0.125+1.47i &0  \\ 
-1.44+0.441i &-4.12 &0 &-0.125+1.47i  \\ 
-0.125-1.47i &0 &-4.12 &1.44+0.441i  \\ 
0 &-0.125-1.47i &1.44-0.441i &0.265   
\end{array} \right)$ 
\\ 
$F_{1 0} = \left( \begin{array}{cccc}1.1 &-1.35-0.375i &-0.0531-1.55i &0  \\ 
-1.35+0.375i &0.17 &0 &-0.0531-1.55i  \\ 
-0.0531+1.55i &0 &0.17 &1.35+0.375i  \\ 
0 &-0.0531+1.55i &1.35-0.375i &1.1   
\end{array} \right)$ 
 \\ 
$F_{1 1} = \left( \begin{array}{cccc}0.519 &1.22-0.952i &0.14-1.11i &0  \\ 
1.22+0.952i &-0.213 &0 &0.14-1.11i  \\ 
0.14+1.11i &0 &-0.213 &-1.22+0.952i  \\ 
0 &0.14+1.11i &-1.22-0.952i &0.519   
\end{array} \right)$ 
 %\end{tabular} 
 \end{center} 
\caption{The operators that define the steering inequality \eqref{eq:steeringineq}. The maximum value of the steering functional (i.e. the lhs of \eqref{eq:steeringineq}) by almost-quantum assemblages is $-0.0258$. Any assemblage that yields a value larger than that has therefore no quantum realisation. } 
\label{table:ineq}
 \end{table}

Every almost-quantum assemblage $\{\tilde{\sigma}_{a_1a_2|x_1x_2}\}$ (and therefore every quantum one) satisfies Eq.~\eqref{eq:steeringineq}. However, our particular assemblage $\{\sigma_{a_1a_2|x_1x_2}\}$ yields a value of $0.2044$ for the steering functional (i.e. the lhs of \eqref{eq:steeringineq}), hence violating Eq.~\eqref{eq:steeringineq}. This certifies the post-quantumness of $\{\sigma_{a_1a_2|x_1x_2}\}$. 

\end{document}